\documentclass[12pt, english]{article}

\usepackage[utf8]{inputenc}
\usepackage{amsmath}
\usepackage{amsthm}
\usepackage{amstext}
\usepackage{amsmath}
\newcommand{\footremember}[2]{%
    \footnote{#2}
    \newcounter{#1}
    \setcounter{#1}{\value{footnote}}%
}
\usepackage{array}
\usepackage{tabu}
\usepackage[T1]{fontenc}
\usepackage{babel}
\usepackage{array}
\usepackage{amssymb}
\usepackage{amsfonts}
\usepackage{xspace}

\usepackage{amsthm}
\newtheorem{theorem}{Theorem}[section]
\newtheorem{definition}{Definition}[section]

\newtheorem{corollary}[theorem]{Corollary}
\newtheorem{lemma}[theorem]{Lemma}

\newcommand{\cAkNN}{$(c,k)$-NN\xspace}
\renewcommand{\epsilon}{\varepsilon}

\title{On the I/O complexity of the k-nearest neighbor problem}

\author{Mayank Goswami\footremember{mayank}{Queens College CUNY, Flushing, New York, USA. Supported by NSF grants CRII-1755791 and CCF-1910873. \texttt{mayank.goswami@qc.cuny.edu}}
\and Riko Jacob\footremember{riko}{IT University of Copenhagen, København S, Denmark. Part of this work done during the second Hawaiian workshop on parallel algorithms and data structures, University of Hawaii at Manoa, Hawaii, USA. \texttt{rikj@itu.dk}}\\
\and Rasmus Pagh\footremember{pagh}{BARC and IT University of Copenhagen, København S, Denmark. Work done in part while visiting Google Research. Supported by Investigator Grant 16582, Basic Algorithms Research Copenhagen (BARC), from the VILLUM Foundation, and by funding from the European Research Council under the European Union’s 7th Framework Programme (FP7/2007-2013) / ERC grant agreement no.~614331. \texttt{pagh@itu.dk}}}

\date{}
\begin{document}

\maketitle

% \author{Mayank Goswami}
% \email{mayank.goswami@qc.cuny.edu}
% \affiliation{  \institution{Queens College CUNY}
%   \city{Flushing}
%   \state{NY}
%   \postcode{11367}
%   \country{USA}
% }

% \author{Riko Jacob}
% \email{rikj@itu.dk}
% \orcid{0000-0001-9470-1809}
% \affiliation{  \institution{IT University of Copenhagen}
%   \city{København S}
%   \state{}
%   \postcode{2300}
%   \country{Denmark}
% }

% \author{Rasmus Pagh}
% \authornote{Work done in part while visiting Google Research. Supported by Investigator Grant 16582, Basic Algorithms Research Copenhagen (BARC), from the VILLUM Foundation, and by funding from the European Research Council under the European Union’s 7th Framework Programme (FP7/2007-2013) / ERC grant agreement no.~614331.}
% \email{pagh@itu.dk}
% \orcid{0000-0002-1516-9306}
% \affiliation{  \institution{BARC and IT U.~Copenhagen}
%   \city{København S}
%   \state{}
%   \postcode{2300}
%   \country{Denmark}
% }

%\keywords{Nearest neighbors, I/O complexity, indexability model}

\begin{abstract}

We consider static, external memory indexes for exact and approximate versions of the $k$-nearest neighbor ($k$-NN) problem, and show new lower bounds under a standard \emph{indivisibility} assumption:

\begin{itemize}
    \item Polynomial space indexing schemes for high-dimensional $k$-NN in Hamming space cannot take advantage of block transfers: $\Omega(k)$ block reads are needed to to answer a query.
    \item For the $\ell_\infty$ metric the lower bound holds even if we allow $c$-appoximate nearest neighbors to be returned, for $c \in (1, 3)$.		

    \item The restriction to $c < 3$ is necessary: For every metric there exists an indexing scheme in the \emph{indexability model} of Hellerstein et al.~using space $O(kn)$, where $n$ is the number of points, that can retrieve $k$ 3-approximate nearest neighbors using {optimal} $\lceil k/B\rceil$ I/Os, where $B$ is the block size.
    \item For specific metrics, data structures with better approximation factors are possible.
		For $k$-NN in Hamming space and every approximation factor $c>1$
    there exists a polynomial space data structure that returns $k$ $c$-approximate nearest neighbors in $\lceil k/B\rceil$ I/Os.
\end{itemize}

To show these lower bounds we develop two new techniques:
First, to handle that approximation algorithms have more freedom in deciding which result set to return we develop a \emph{relaxed} version of the $\lambda$-set workload technique of Hellerstein et al.
This technique allows us to show lower bounds that hold in $d\geq n$ dimensions.
To extend the lower bounds down to $d = O(k \log(n/k))$ dimensions, we develop a new deterministic dimension reduction technique that may be of independent interest.
 \end{abstract}

\section{Introduction}

The DEEP1B data set~\cite{babenko2016efficient} is among the largest image data sets that has been examined in the similarity search literature.
From each of $n = 10^9$ images, a 96-dimensional vector has been extracted from an intermediate layer of a pre-trained deep neural network, a state-of-the-art method for semantically meaningful feature vectors~\cite{babenko2014neural}.
Such feature vectors can be thought of as compressed representations of the images that, for example, can be used to estimate the similarity of two images.
In many use cases, though, it is not enough to substitute the images with their feature vectors, but we also need to be able to access the corresponding images.
Though the size of the raw image data behind DEEP1B is not stated in~\cite{babenko2016efficient}, an estimate would be 1 MB per image on average, or 1000 TB in total.
Clearly, retrieving similar images in a data set of this size is beyond what is possible on a single machine, and even just indexing the set of feature vectors would require an amount of internal memory that is larger than what is present in most servers.

\paragraph{$k$-nearest neighbors}

In the $k$-nearest neighbors ($k$-NN) problem we want to construct a data structure on a set $P$ of $n$ points in some metric space that, given an integer $k > 0$ and a query point $q$, finds the $k$ closest points to $q$ in $P$.
We will be focusing on data structures that can be constructed in polynomial time and space.
The $k$-NN problem is believed to be hard in high dimensions even for $k=1$, and the brute-force algorithm that considers all data points in $P$ essentially optimal.
In particular, Williams (see~\cite{Williams}) proved that for constant $\epsilon > 0$, no $O(n^{1-\epsilon})$ query time data structure is possible for $\omega(\log n)$-dimensional Hamming space, assuming the Strong Exponential Time Hypothesis.

Because of the hardness of the problem most research has revolved around \emph{approximate} solutions. 
The \emph{$c$-approximate} $k$-NN (\cAkNN) problem asks to return $k$ points from $P$ with distance at most $cr$ to $q$, where $r$ is the distance to the $k$th nearest neighbor of $q$.
It is known that \cAkNN is equivalent, up to polylogarithmic factors, to the simpler \emph{near neighbor} problem: Given an upper bound $r_+\geq r$, return a point within distance $cr_+$~\cite{DBLP:journals/toc/Har-PeledIM12}.
We refer to~\cite{andoni2018approximate} for more background on recent developments in approximate near neighbor search.

\paragraph{Models of computation}

Motivated by large-scale similarity search applications we consider models of hardware aimed at massive data sets.
The \emph{external memory model}~\cite{aggarwal} abstracts modern, block-oriented storage where memory consists of blocks each capable of holding $B$ data items. 
The cost of an algorithm or data structure is measured in terms of the number of block accesses, referred to as \emph{I/Os}.
When considering the $k$-NN problem we let $B$ denote the number of vectors that fit in a block.

Distributing similarity search onto many machines has also been considered as a way of scaling to large data sets~\cite{bahmani2012efficient, muja2014scalable, hu_output-optimal_2019}.
We can interpret a static, external memory data structure as an abstraction of a large, distributed system in which each server holds $B$ data items (and information associated with them).
In this context the parameter $B$ may be relatively large in comparison to $n$.
For example, to store the $n = 10^9$ DEEP1B vectors of dimension $d=96$, and associated raw data, we could imagine $s = 10^4$ servers each holding $B = 10^6$ data items (so that on average each item is replicated $10$ times to achieve redundancy).
The number of I/Os needed to answer a query then equals the number of servers that need to be involved when answering a $k$-NN query.

Most lower bounds for I/O efficient algorithms are shown under the assumption that data items are \emph{indivisible} in the sense that they are treated as atomic units, and that a block contains a well-defined set of at most $B$ data items.
The \emph{indexability model}~\cite{indexability1,indexability2}, introduced at PODS '97, formalized external memory data structures for queries that return a set of $k$ data items under the indivisibility assumption.
For a given data structure, the complexity of a query family $\mathcal{Q}$ is the smallest number of blocks that must be read to answer a query, in the worst case over all queries $q\in\mathcal{Q}$.
There does not need to be a constructive procedure for \emph{identifying} the correct blocks. 
In particular, the nearest neighbor problem ($k=1$) is trivial in the indexability model since the block containing the answer to the query is given for free; the search aspect is completely removed from consideration, and the algorithm would return the block in one I/O.
Though the original indexability model does not accommodate notions of approximation, it can be naturally extended to the setting where there is a set of at least $k$ elements that are valid answers to a query and we are required to return any $k$ of them.

%We present our results in Section~\ref{sec:results}, but first discuss related work.

%\noindent\textbf{Lower bounds and the indexability model:} Our lower bounds will be in the stronger indexability model proposed by Hellerstein et al.~\cite{indexability1}, and will therefore apply to any external memory data structure. In this model, the ``search'' aspect of our problem is given for free---i.e., upon receiving the query $q$, an oracle returns to the data structure the $k$ nearest neighbors of $q$. The data structure is now required to go to disk and ``fetch'' either those $k$ nearest neighbors, or any set of $c$-approximate $k$ nearest neighbors. The best that the data structure can do is fetch them in$\lceil k/B \rceil$ I/Os, but 

\subsection{Our results}\label{sec:results}

The complexity of general \cAkNN~queries in the I/O model lies between two extremes:
\begin{itemize}
\item There exist $\lceil k/B \rceil$ blocks (or servers) that contains a set of $k$ valid answers to the query, and
\item No block (or server) contains more than one valid query answer, so $k$ block reads are needed.
\end{itemize}
Since we do not care about constant factors we can, for simplicity, assume that $k < n/2$, since otherwise a trivial brute-force algorithm that reads all points is optimal within a factor of~2.
We give several upper and lower bounds for \cAkNN that suggest a dichotomy for polynomial space data structures.
For various choices of metric space, number of dimensions, and approximation factor we see that it is \emph{either} possible to achieve $O(k/B)$ I/Os, \emph{or} $\Omega(k)$ I/Os are provably required.

Our results are summarized in Table~\ref{table:results}. For our lower bounds, the dimension column represents the minimum (asymptotic) number of dimensions required for the bounds to hold.
Since it is possible to decrease the I/O complexity in lower dimensions~\cite{streppel2011approximate}, a condition on the number of dimensions is needed.
Our upper bounds do not depend on the number of dimensions, except indirectly through the definition of $B$ as the number of vectors in a block.
We stress that the I/O upper bounds for an indexing scheme do not imply the existence of a data structure with the same guarantees --- for a data structure in the I/O model we would expect an additional search cost. Our main theorems are:
%For simplicity we assume $k < n/2$. \rasmus{Check if we need $k<n^{1-\epsilon}$}

%\rasmus{State theorems here?}

\begin{table}[t]
%\caption{Table of our results}\label{table}
\begin{center}
\tabulinesep=1.5mm
\begin{tabu}{c|c|c|c} %to 0.8\textwidth { | X[c] | X[c] | X[c] | X[c] | X[c] | }
%\hline
     %{\bf Reference} & 
     {\bf Metric} & {\bf Dimension} & \begin{tabular}{c}{\bf Approximation}\\ {\bf factor}\end{tabular} & {\bf I/O bound} \\
        \hline
        \hline
        %Theorem~\ref{xxx} & 
        any & any & any & $\ge \lceil k/B \rceil$\\
        \hline
        %Theorem~\ref{xxx} & 
        $\ell_\infty$ & $\Omega(k \log(n/k))$ & $3-\epsilon$ & $\Omega(k)$ \\
         \hline
%         \vspace{2mm}Euclidean & \vspace{2mm}$\Omega(k \log n)$ & $1+\epsilon/k$ approx. in $<k$ I/Os requires $\Omega(N^{2}/kB^{2})$ blocks & \vspace{2mm}? \\
%         \hline
        %Theorem~\ref{xxx} & 
        Hamming & $\Omega(k \log (n/k))$ & $1+\tfrac{1}{4k}$ & $\Omega(k)$ \\
         \hline
        %Theorem~\ref{xxx} & 
        any & any & $3$ & $\le\lceil k/B \rceil$ \\
         \hline
        %Theorem~\ref{xxx} & 
        Hamming & any & $1+\epsilon$ & $\le\lceil k/B \rceil$ \\
         \hline
\end{tabu}
\end{center}
\caption{Our I/O lower bounds (first three rows) and upper bounds (last two rows) on the \cAkNN problem for data structures using polynomial space, where $\epsilon > 0$ can be any constant. The lower bound for $\ell_\infty$ holds assuming $k < n^{1-\epsilon}$.}\label{table:results}
\end{table}

\begin{theorem}\label{linfinitylower}(L-infinity metric lower bound)  Consider an indexing scheme for \cAkNN in $d$-dimensional $\ell_\infty$ space with $c<3$  and
  with worst case query time $\lceil k/\alpha \rceil$ I/Os, where $1 \leq \alpha \leq B$.
  For sufficiently large $d = O(k \log(n/k))$, the indexing scheme must use $\Omega \left(\left(\frac{1}{2e}\sqrt{\frac{n \alpha}{k B}} \right)^{\alpha}\right)$ blocks of space.
  \end{theorem}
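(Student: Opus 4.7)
The plan is to establish Theorem~\ref{linfinitylower} by instantiating the relaxed $\lambda$-set workload lemma on a carefully designed family of $\ell_\infty$ queries, and then applying the new deterministic dimension reduction to bring the ambient dimension down to $O(k\log(n/k))$. I would proceed in four stages.

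\textbf{Stage 1 (high-dimensional hard instance).} First, I would exhibit a point set $P$ of $n$ points in $\ell_\infty^{d_0}$, where $d_0$ is some polynomial in $n$, together with a large family $\mathcal{F}$ of $k$-subsets of $P$ such that for every $S \in \mathcal{F}$ there is a query $q_S$ whose $k$-th nearest neighbor lies at some distance $r_S$ while every point in $P \setminus S$ lies at distance at least $3r_S$ from $q_S$. Since $c<3$, any $c$-approximate $k$-NN answer for $q_S$ is then forced to equal $S$ exactly, so the geometric problem collapses to a pure indexability question on the workload $\{S : S \in \mathcal{F}\}$. A natural construction sets aside one fresh separator coordinate per intended query $q_S$, chosen so that this coordinate pushes every point of $P\setminus S$ away from $q_S$ by a factor of $3$ while leaving the points of $S$ at the common nominal distance. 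By a standard packing / Steiner-system argument $\mathcal{F}$ can be chosen to contain $\Omega((n/k)^\alpha)$ subsets whose pairwise intersections are strictly less than $\alpha$.

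\textbf{Stage 2 (relaxed $\lambda$-set counting).} For an indexing scheme with query cost $\lceil k/\alpha\rceil$, each answer $S$ must be covered by $\lceil k/\alpha\rceil$ blocks, so by pigeonhole some block holds at least $\alpha$ elements of $S$. I would charge every query to one such heavy block. A fixed block of capacity $B$ contains only $\binom{B}{\alpha}$ distinct $\alpha$-subsets, and when all pairwise intersections of distinct $S \in \mathcal{F}$ are below $\alpha$ these $\alpha$-subsets separate distinct queries; hence at most $\binom{B}{\alpha}$ queries can be charged to a single block, and $s \geq |\mathcal{F}|/\binom{B}{\alpha}$. Plugging in the bound $|\mathcal{F}| = \Omega((n/k)^\alpha)$, the estimate $\binom{B}{\alpha} \leq (eB/\alpha)^\alpha$, and the slack loss incurred by the relaxed version of the argument (in which the adversary is free to pick among several admissible answers per query), yields the claimed $\Omega\bigl((\tfrac{1}{2e}\sqrt{n\alpha/(kB)})^\alpha\bigr)$ lower bound. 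The square root inside the exponent is precisely the price paid to support the relaxation; for the $c<3$ construction the admissible answer is unique, but I want to state the lemma in a form reusable for the Hamming theorem.

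\textbf{Stage 3 (the relaxed $\lambda$-set lemma).} To make Stage 2 rigorous I would formally generalize the Hellerstein et al. $\lambda$-set lemma to workloads in which each query $q$ is decorated with a family $\mathcal{A}(q)$ of admissible $k$-sets and the adversary returns any one of them. The idea is to pass to the bipartite incidence structure in which $(q, \text{block})$ is an edge iff some $S \in \mathcal{A}(q)$ has at least $\alpha$ members in that block; a fractional covering of this bipartite graph, combined with the pairwise intersection condition now imposed across \emph{all} admissible answers of distinct queries, recovers the classical counting up to the square-root loss mentioned above. For Theorem~\ref{linfinitylower} Stage~1 achieves $|\mathcal{A}(q_S)|=1$, so only the mild version of the lemma is actually invoked.

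\textbf{Stage 4 (deterministic dimension reduction).} The construction of Stage 1 lives in dimension $d_0$ proportional to $|\mathcal{F}|$, far above the target $O(k\log(n/k))$. I would finish by embedding the entire workload into $\ell_\infty^{d'}$ with $d' = O(k\log(n/k))$ while preserving, for every $q_S \in \mathcal{F}$, the separation between distance $r_S$ to points in $S$ and distance at least $3r_S$ to points outside $S$. The route is a derandomized coordinate sampling: show by a tail estimate that a random projection onto $d'$ coordinates preserves the separation for each fixed query with probability $1 - o(1/|\mathcal{F}|)$, union-bound over $\mathcal{F}$, and then derandomize via the method of conditional expectations or an explicit combinatorial design to produce one deterministic embedding that works simultaneously for all queries.

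\textbf{Main obstacle.} The delicate step is Stage 4. Unlike $\ell_1$ or $\ell_2$, the $\ell_\infty$ distance is a coordinatewise maximum and is dangerously sensitive to dropping the single coordinate that witnesses the separation for a pair of points. Designing a deterministic coordinate selection that simultaneously retains at least one witnessing coordinate for every $(q_S, p)$ pair with $p \notin S$, while compressing to only $O(k\log(n/k))$ coordinates, is the technical heart of the argument; once this embedding is in hand, Stages 1--3 follow the indexability template essentially mechanically.
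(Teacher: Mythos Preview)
Your high-level outline---realize a $k$-set workload geometrically, apply the tradeoff lemma, then reduce the dimension---matches the paper, but the two key technical steps differ and your Stage~4 has a real gap.

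For Stage~1 the paper's $n$-dimensional instance is simpler than yours: the data points are the unit vectors $e_1,\dots,e_n$, and the query $q_I$ for a $k$-set $I$ has coordinate $+1/2$ on $I$ and $-1/2$ off $I$, so $\|e_i-q_I\|_\infty$ equals $1/2$ for $i\in I$ and $3/2$ otherwise. All $\binom{n}{k}$ subsets are realized in dimension $n$ with no Steiner system and no per-query separator coordinate. The crucial divergence is Stage~4. Random coordinate sampling cannot work against your own Stage~1 construction: if each $q_S$ relies on \emph{one} dedicated separator coordinate, sampling $d'=O(k\log(n/k))$ coordinates out of $d_0=|\mathcal F|$ retains that coordinate with probability only $d'/d_0$, so the failure probability per query is $1-o(1)$, not $o(1/|\mathcal F|)$, and no union bound or derandomization rescues this. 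You identify this as the ``main obstacle'' but offer no mechanism that could overcome it.

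The paper does not reduce the dimension of any high-dimensional instance. It \emph{constructs} the low-dimensional instance directly from a bipartite $(k,\delta,1/3)$-expander $G=(U,V,E)$ with $|U|=n$, $\delta=O(\log(n/k))$, and $|V|=d=O(k\log(n/k))$: data point $p_i$ is the characteristic vector of $\Gamma(i)\subseteq V$, and $q_I$ is $+1/2$ on $\Gamma(I)$ and $-1/2$ elsewhere. Expansion bounds the number of ``unintended'' indices $j\notin I$ with $\Gamma(j)\subseteq\Gamma(I)$ by $k/2$. Hence the low-dimensional instance yields only a \emph{relaxed} $k$-set workload (the admissible answer is any $k$ points from some $I'\supseteq I$ with $|I'|\le 3k/2$), and this is precisely why the relaxed tradeoff is needed here---not merely for later reuse in the Hamming theorem---so the square-root loss in the bound is genuinely incurred. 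The paper's relaxed tradeoff, incidentally, is a direct counting argument (choose $t$ blocks, then $\lambda/2$ elements from their union, then $\lambda/2$ arbitrary elements, and compare to $\binom{n}{\lambda}$), not the intersection-family charging you sketch in Stage~2.
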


\begin{theorem}\label{generalupper}(General metric indexing scheme) Given a set $P$ of $n$ points in any metric space, there exists a $3$-approximate indexing scheme that uses $n\lceil k/B \rceil$ blocks of space (where $B$ is the block size) and returns  $3$-approximate, $k$ nearest neighbors in optimal $\lceil k/B \rceil$ I/Os.
\end{theorem}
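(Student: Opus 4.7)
The plan is to build, for each data point, a precomputed ``answer bucket'' and then argue via the triangle inequality that serving the bucket belonging to the data point nearest the query gives a 3-approximation. Concretely, for every $p \in P$ I would compute the set $N(p) \subseteq P$ consisting of $p$ together with its $k-1$ nearest neighbors in $P$ (so $|N(p)| = k$), and lay $N(p)$ out contiguously in $\lceil k/B \rceil$ blocks tagged with $p$. Summing over the $n$ data points gives the claimed $n\lceil k/B \rceil$ block space bound, and the construction is clearly polynomial-time.

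For a query $q$, let $p^\star = \arg\min_{p \in P} d(q,p)$ be its closest data point. In the indexability model of Hellerstein et al.\ we are allowed to declare the $\lceil k/B \rceil$ blocks holding $N(p^\star)$ as the answer set; no search procedure has to identify them, so the query cost is exactly $\lceil k/B \rceil$ I/Os, matching the trivial lower bound in the first row of Table~\ref{table:results}.

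The main step is correctness. Let $r$ be the distance from $q$ to its true $k$th nearest neighbor in $P$. Since $p^\star$ is the closest data point to $q$, and in particular at least as close as the $k$th true nearest neighbor, $d(q,p^\star) \le r$. The $k$ true nearest neighbors of $q$ lie within distance $r$ of $q$, and therefore, by the triangle inequality, within distance $d(q,p^\star) + r \le 2r$ of $p^\star$. Hence $p^\star$ has at least $k$ points of $P$ within distance $2r$ (including itself), so every $x \in N(p^\star)$ satisfies $d(p^\star,x) \le 2r$. A second application of the triangle inequality yields $d(q,x) \le d(q,p^\star) + d(p^\star,x) \le r + 2r = 3r$, so each of the $k$ returned points is a $3$-approximate nearest neighbor of $q$, as required.

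There is no real obstacle here; the argument is a one-line triangle inequality on top of a straightforward precomputation. The only subtlety worth flagging explicitly in the write-up is that the indexability model absorbs the (otherwise nontrivial) task of locating $p^\star$ for free, which is precisely what allows the I/O bound to match the trivial $\lceil k/B \rceil$ lower bound and yields the dichotomy advertised in Section~\ref{sec:results}.
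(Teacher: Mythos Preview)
Your construction is identical to the paper's, and your correctness argument is correct. The difference is in how the $3r$ bound is obtained: the paper fixes the ball $D$ of radius $d(p^\star,p_{i^\star,k})$ around $p^\star$ and does a four-case analysis on whether $q$ and $p_k^q$ lie inside or outside $D$, chaining two or three triangle inequalities in each case. Your argument is more direct and more elegant: you first observe that the $k$ exact nearest neighbors of $q$ are all within $2r$ of $p^\star$, which immediately forces every member of $N(p^\star)$ to lie within $2r$ of $p^\star$ as well, and then a single triangle inequality finishes. This replaces the entire case split by one uniform bound on the radius of $N(p^\star)$. The paper's case analysis does yield the slightly sharper factor $2$ (rather than $3$) in two of the four cases, but since the theorem only claims factor $3$, nothing is lost by your shorter route.
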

  
 \begin{theorem}\label{hamminglower}(Hamming metric lower bound)  Consider an indexing scheme for \cAkNN in $d$-dimensional Hamming space with $c<1+\frac{1}{4k}$ and 
  with worst case query time $\lceil k/\alpha \rceil$ I/Os, where $1 \leq \alpha \leq B$.
  For sufficiently large $d = O(k \log(n/k))$, the indexing scheme must use $\Omega \left(\left(\frac{1}{2e}\sqrt{\frac{n \alpha}{k B}} \right)^{\alpha}\right)$ blocks of space.
  \end{theorem}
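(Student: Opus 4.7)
The plan is to mirror the proof of Theorem \ref{linfinitylower}, substituting the $\ell_\infty$ construction with a Hamming-space instance that satisfies the same relaxed $\lambda$-set workload structure, and then applying the same deterministic dimension reduction. The approximation threshold $c < 1 + \frac{1}{4k}$ plays the role that $c < 3$ played in the $\ell_\infty$ setting: it is the largest slack against which the underlying combinatorial construction can be calibrated in Hamming space.

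First I would build a workload in Hamming space of dimension $d$ at least linear in $n$. The data points would be indicator vectors of subsets of $[d]$ taken from a suitable combinatorial design, and the queries would be indicator vectors chosen so that for every query $q$ there is a well-defined ``shell'' containing every point with Hamming distance to $q$ in $[r,cr]$, where $r$ is the distance to the $k$-th nearest neighbor. The design should be calibrated so that each shell contains only slightly more than $k$ points, so the $(1+\frac{1}{4k})$-slack allows the algorithm to swap at most a constant number of points within any valid answer set. With this structure in place, the counting step of the relaxed $\lambda$-set workload argument goes through essentially verbatim: any indexing scheme that answers each query in $\lceil k/\alpha\rceil$ I/Os must assemble some valid $k$-answer set from the union of $\lceil k/\alpha\rceil$ blocks, and bounding the number of valid answer sets reachable from any block collection against the total number of valid answer sets across the workload yields the claimed $\Omega\bigl(\bigl(\tfrac{1}{2e}\sqrt{n\alpha/(kB)}\bigr)^{\alpha}\bigr)$ space bound.

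Second I would apply the deterministic dimension reduction from the proof of Theorem \ref{linfinitylower} to bring the ambient dimension down to $d = O(k\log(n/k))$ while preserving every relevant Hamming distance within a multiplicative factor strictly smaller than $1+\frac{1}{4k}$. Hamming space is especially amenable to such a reduction because distance is a sum of per-coordinate disagreements, so a sufficiently derandomised projection onto $O(k\log(n/k))$ coordinates should preserve the $O(n)$ critical query--point distances that the workload exposes, without collapsing the gap between the shells of consecutive distance classes.

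The main obstacle is calibrating the tight $(1+\frac{1}{4k})$ approximation gap simultaneously in the shell construction and in the dimension reduction. Because Hamming distance is integer-valued, a multiplicative slack of $\frac{1}{4k}$ corresponds to an additive slack of $r/(4k)$, which must be strictly smaller than the spacing between consecutive distance shells in the design; this forces the construction to live at scale $r = \Omega(k)$ and forces the reduction to introduce only $o(r/k)$ additive error per pair. Arranging both simultaneously is what dictates the dimension bound $d = O(k\log(n/k))$ rather than something like $O(\log n)$, and it is where I expect the bulk of the technical work to lie.
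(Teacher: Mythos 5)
There is a genuine gap in the second half of your plan. Your counting step is fine and matches the paper: once you have a relaxed $k$-set workload, Lemma~\ref{relaxedtradeoff} and Corollary~\ref{relaxedtradeoffpolynomial} give the claimed space bound verbatim. The problem is how you propose to reach dimension $d = O(k\log(n/k))$. You ask for a deterministic map that preserves all relevant query--point Hamming distances within a multiplicative factor strictly below $1+\frac{1}{4k}$. Such a map cannot exist at this target dimension: as the paper itself notes in the discussion of Section~\ref{sec:results}, dimension reduction in Hamming space with distortion $c$ requires $\Omega(\log(n)/(c-1)^2)$ dimensions, which for $c-1 \approx 1/(4k)$ is $\Omega(k^2\log n)$, far above $O(k\log(n/k))$. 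Moreover, the relevant distances are not ``$O(n)$ critical pairs'' but all $n$ distances for each of the $\binom{n}{k}$ queries, and a ``projection onto $O(k\log(n/k))$ coordinates'' of the natural high-dimensional instance (unit vectors queried by characteristic vectors of $I$, giving distances $k-1$ versus $k+1$) simply destroys the instance, since almost all unit vectors project to zero.

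The paper's actual mechanism is different and is the crux of the proof: it does not reduce the dimension of an existing instance at all, but directly constructs the low-dimensional point set from a bipartite $(k,\delta,1/4)$-expander $G=(U,V,E)$ with $|U|=n$, $|V|=O(k\log n)$, taking $p_i$ to be the characteristic vector of $\Gamma(i)$ and $q_I$ the characteristic vector of $\Gamma(I)$. Distances are \emph{not} preserved: one computes $d_i = |\Gamma(I)|-\delta$ for $i\in I$ and $d_j = |\Gamma(I)|-\delta+2|\Gamma(j)\setminus\Gamma(I)|$ in general, so any $j$ reported under approximation factor $1+\frac{1}{4(k-1)}$ must satisfy $|\Gamma(j)\setminus\Gamma(I)|\le \delta/4$. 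The expansion property then bounds the number of such ``unintended'' indices by $|I^*|\le k/2$ --- not by a constant, as you suggest --- and it is precisely this $k/2$ leakage that the relaxed $k$-set workload definition is designed to absorb. Without replacing your distance-preserving-reduction step by such a direct combinatorial construction (or an equivalent argument that tolerates distorted distances while controlling the count of unintended near neighbors), the proof does not go through.
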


 \begin{theorem}\label{hammingupper}(Hamming indexing scheme) Given a set $P$ of $n$ points in $d$-dimensional Hamming space and a constant $c>1$, there exists a $c$-approximate indexing scheme that uses $O(n ^{O(1)}d\lceil k/B \rceil)$ blocks of space and returns  $c$-approximate, $k$ nearest neighbors in optimal $\lceil k/B \rceil$ I/Os. 
\end{theorem}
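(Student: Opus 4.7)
The plan is to reduce the problem to multiple instances of the $3$-approximate $k$-NN problem handled by Theorem~\ref{generalupper}, one at each geometric scale $r_j=(1+\epsilon/4)^j$ for $j=0,1,\ldots,O(\log d/\epsilon)$ (writing $\epsilon=c-1>0$), by composing with a distance-amplifying Hamming-to-Hamming embedding at each scale. The per-scale structure $D_j$ will be responsible for queries whose $k$-th nearest neighbor distance lies in $[r_j/(1+\epsilon/4),r_j]$. Since the indexability model only charges for block reads and the scale matching any fixed query is determined by the query itself, the per-scale structures can be analyzed independently and combined by taking the union of their block sets.

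The key building block is, for each scale $j$, a gap-amplifying embedding $\phi_j:\{0,1\}^d\to\{0,1\}^{d'_j}$ with threshold $r'_j$ such that (i) $d_H(x,y)\le r_j$ implies $d_H(\phi_j(x),\phi_j(y))\le r'_j$, and (ii) $d_H(x,y)\ge(1+\epsilon/4)r_j$ implies $d_H(\phi_j(x),\phi_j(y))>3r'_j$. Given such $\phi_j$, apply Theorem~\ref{generalupper} to the embedded set $\phi_j(P)\subseteq\{0,1\}^{d'_j}$ under Hamming distance; this yields a $3$-approximate $k$-NN indexing scheme on $\phi_j(P)$ using $n\lceil k/B\rceil$ blocks, where each block stores the original $P$-indices of the embedded points. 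For a query $q$ at its matching scale $j$, the $k$-th embedded NN distance is at most $r'_j$ by (i), so the returned embedded points lie within $3r'_j$ of $\phi_j(q)$, and the contrapositive of (ii) forces their pre-images $p$ to satisfy $d_H(q,p)\le(1+\epsilon/4)r_j\le(1+\epsilon/4)^2\,r_k(q)\le(1+\epsilon)r_k(q)$. Summing over the $O(\log d/\epsilon)$ scales yields total space $n^{O(1)}d\lceil k/B\rceil$ and query cost $\lceil k/B\rceil$ I/Os.

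The main obstacle is constructing $\phi_j$ deterministically so that the gap property holds for every pair $(q,p)\in\{0,1\}^d\times P$ simultaneously. A randomised Kushilevitz--Ostrovsky--Rabani style embedding, in which each output bit is a threshold function of a pseudorandom subset of input bits, satisfies the gap for each fixed pair with failure probability $\exp(-\Omega(d'_j\epsilon^2))$; taking $d'_j=\Omega(d+\log n)$ and union-bounding over the $n\cdot 2^d$ pairs then gives the desired universal guarantee with positive probability. I would derandomise this construction using explicit $\epsilon$-biased / small-bias sample spaces (e.g.~Naor--Naor), which produce a polynomial-size family of candidate embeddings that can be searched exhaustively at preprocessing time to fix a deterministic $\phi_j$ meeting the gap property for all relevant pairs. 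Since the resulting $\phi_j$ still has $d'_j=\mathrm{poly}(d,\log n,1/\epsilon)$ and we only need $|\phi_j(P)|=n$ embedded vectors per scale, each per-scale application of Theorem~\ref{generalupper} fits in the $n^{O(1)}d\lceil k/B\rceil$ space budget, and combining the $O(\log d/\epsilon)$ scales completes the construction.
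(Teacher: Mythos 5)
The plan hinges on a building block that cannot exist. You require, for every pair in $\{0,1\}^d\times P$, that $d_H(x,y)\le r_j$ implies $d_H(\phi_j(x),\phi_j(y))\le r'_j$ while $d_H(x,y)\ge(1+\epsilon/4)r_j$ implies $d_H(\phi_j(x),\phi_j(y))>3r'_j$. Take $y,z\in P$ with $d_H(y,z)\le r_j$ (nested supports) and a query $q$ with $d_H(q,y)=r_j$ and $d_H(q,z)\ge(1+\epsilon/4)r_j$; such a configuration is available to the adversary since the theorem must hold for every $P$. Property (i) applied to the pairs $(q,y)$ and $(y,z)$ (both lie in $\{0,1\}^d\times P$, so both are covered by your requirement) gives $d_H(\phi_j(q),\phi_j(y))\le r'_j$ and $d_H(\phi_j(y),\phi_j(z))\le r'_j$, hence $d_H(\phi_j(q),\phi_j(z))\le 2r'_j$ by the triangle inequality in the target space, contradicting property (ii). So no map into any metric space, deterministic or randomized, can amplify a $(1+\epsilon/4)$ distance gap into a factor-$3$ gap under these constraints; the derandomization discussion is therefore moot. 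Concretely, the Kushilevitz--Ostrovsky--Rabani-type maps you invoke do not even achieve this for a single fixed pair: an output bit built from a sparse sample flips with probability $h(a)$ where $h$ is monotone and concave, so the expected embedded distances of a near and a far pair differ by a factor at most $1+\Theta(\epsilon)$ (an additive $\Theta(\epsilon D)$ separation around $\Theta(D)$), never a factor $3$. Consequently the reduction to Theorem~\ref{generalupper} collapses: the far points that the $3$-approximate scheme may legally return are not excluded by your embedding.

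The paper's proof takes a different route precisely to avoid needing any gap amplification. It uses the KOR map only for its weak, comparison-type guarantee --- if $d(q,x)\le r$ and $d(q,y)>cr$ then $d(t(q),t(x))<d(t(q),t(y))$ --- and reduces to dimension $D=O(\log n)$, which makes it possible to \emph{precompute, for every one of the $2^D=n^{O(1)}$ possible reduced query strings}, the $\lceil k/B\rceil$ blocks listing the original points whose images are closest. Correctness for all $2^d$ queries is obtained by taking $O(d)$ independent maps and a union bound, and the indexability model's oracle is what selects the right scale $r$, the right repetition, and the right blocks, so no explicit (or derandomized) construction is needed. If you want to salvage your multi-scale architecture, you would have to replace the impossible gap-amplifying embedding with this enumerate-all-reduced-queries idea; applying Theorem~\ref{generalupper} inside the embedded space cannot work, because a $3$-approximation there only translates back to a $3(1\pm o(1))$-type guarantee, not to $1+\epsilon$.
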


\paragraph{Discussion.}
We are not aware of any studies of $c$-approximate, $k$-nearest neighbors in the I/O model.
However, our lower bounds suggest that such data structures cannot simultaneously have good output sensitivity (I/O complexity in terms of $k/B$) and low space usage.
The currently best data structures for high-dimensional, $c$-approximate $r$-near neighbor (return $1$ point within known distance~$r$) have query time lower bounded by $(n/B)^{f(c)}$ I/Os, for some non-increasing function $f$ with $f(1)=1$ that depends on the metric.
Depending on the choice of $c$ and $k$ this ``search cost'' may be smaller or larger than the cost of reporting the $c$-approximate $k$-nearest neighbors in the indexability model.
In general, we would expect the reporting cost to dominate only when $c$ and $k$ are sufficiently large.

Our lower bounds assume that the number of dimensions is sufficiently large.
An attempt to bypass the lower bounds would thus be to utilize some kind of \emph{dimension reduction}, such as the Johnson-Lindenstrauss lemma.
Unfortunately, this will not work in the settings of Theorems~\ref{linfinitylower} and~\ref{hamminglower}.
This is because dimension reduction is not possible for $\ell_\infty$~\cite{matouvsek1996distortion}, and because dimension reduction for Hamming space with distortion $c$ requires $\Omega(\log(n)/(c-1)^2)$ dimensions~\cite{larsen2017optimality}, which is $\Omega(k^2)$.

\subsection{Related work}\label{sec:related}

\paragraph{Lower bounds on nearest neighbors in restricted models}

A well-known work of Berchtold et al.~\cite{Berchtold1997ACM} analyzes the performance of certain types of nearest neighbor indexes on \emph{random} data sets.
More recently, Pestov and Stojmirovi{\'c}~\cite{pestov2006indexing} and Pestov~\cite{pestov2013lower} showed lower bounds for high-dimensional similarity search in a model of data structures that encompasses many tree-based indexing methods.
These results do not consider approximation, and their algorithmic models do not encompass modern algorithmic approaches to approximate similarity search such as locality-sensitive hashing.

\paragraph{Data structure lower bounds based on indivisibility}

There is a rich literature, starting with the seminal paper of Aggarwal and Vitter~\cite{aggarwal}, giving lower bounds on I/O-efficiency under an indivisibility assumption.
Such results in the context of data structures are known for dynamic dictionaries~\cite{wei2009dynamic}, planar point enclosure~\cite{arge2009optimal}, range sampling~\cite{hu2014independent}, and many variants of orthogonal range reporting~\cite{afshani2012improved, afshani2009orthogonal, indexability1, larsen2013near, yi2009dynamic}.
Below we elaborate on the most closely related works on orthogonal range queries.

To our best knowledge the high-dimensional $k$-NN problem has not been \emph{explicitly} studied in the indexability model~\cite{indexability1}.
However, in $n$-dimensional Hamming space it is straightforward to use the \emph{$k$-set workload} technique of~\cite{indexability1} to show that even obtaining $k-1$ I/Os is not possible (for $k\ll n$) unless the indexing scheme uses quadratic space.
Our lower bound technique is a generalization of the $k$-set workload that allows us to deal with approximation as well as space usage larger than quadratic.
It also allows us to show lower bounds all the way down to $O(k \log n)$ dimensions, as opposed to $n$ dimensions.

\paragraph{Orthogonal range queries}

\emph{Orthogonal range reporting} in $d$ dimensions asks to report all points in $P$ lying inside a query range $\mathcal{Q}$ that is a cross product of intervals.
% This problem was studied already in~\cite{indexability1}.
% Theorem 6.1: Redundancy > (log B/log A)^{d-1}, where A is access overhead
Note that $k$-NN in the $\ell_\infty$ metric is the special case of orthogonal range reporting where all intervals have the same length.
Hellerstein et al.~\cite{indexability1} showed that in order to answer orthogonal range reporting queries in $O(k/B^{1-\epsilon})$ I/Os, for some $\epsilon \in (0,1)$, the data structure needs to use space $\Omega(n / \epsilon^{d-1})$.
In particular, in dimension $d = \omega(\log n)$ a polynomial space data structure needs $k/B^{o(1)}$ I/Os.

\emph{Approximate} $d$-dimensional range reporting has been studied in the I/O model: Streppel and Yi~\cite{streppel2011approximate} show that for a query rectangle $q$ and constant $\epsilon > 0$, allowing the data structure to report points at distance up to $\epsilon \cdot\text{diam}(q)$ from $q$ makes it possible to report $k$ points in $2^{O(d)} + O(k/B)$ I/Os (plus a logarithmic search cost, which does not apply to the indexability model).
To our best knowledge, no lower bounds were known for approximate range queries before our work.

%A related problem is range reporting: given a data set $P$ in $\mathbb{R}^{d}$, construct a data structure which, given a query range $Q$, returns all points in $P$ contained in $Q$. The most extensively studied case is orthogonal range reporting, where the range $Q$ is a cross product of intervals. Another case is when $Q$ is a ball. In the case of a ball, the problem is similar to nearest neighbor search, as the points inside the ball will be the near neighbors of the center of the ball, say $q$. In the case of orthogonal range reporting, if the range is the cross product of equal-length intervals, say length $\ell$, then it can be seen as similar to a nearest neighbor problem in the $\ell_{\infty}$ metric.
%\riko{perhaps say something about 'near neighbor'}
%\mayank{The next two paragraphs introduce the approximate version}

\paragraph{Lower bounds based on computational assumptions}

Unconditional lower bounds for \cAkNN in the cell probe model~\cite{Barkol:2002:TLB:779034.779041, borodin1999lower, Chakrabarti:2004:ORC:1032645.1033203, panigrahy2010lower} only match upper bounds in the regime where space is very large.
To better understand the complexity for, say, sub-quadratic space usage, a possibility is to base lower bounds on computational assumptions such as the Strong Exponential Time Hypothesis (SETH), or the weaker Orthogonal Vectors Hypothesis (OVH).
Recently, Rubinstein~\cite{DBLP:conf/stoc/Rubinstein18} showed that under either of these hypotheses, for each constant $\delta > 0$, achieving an approximation factor of $c=1+o(1)$ is not possible for a data structure with polynomial space and construction time unless the query time exceeds $n^{1-\delta}$.
Already in 2001 Indyk showed that in the $\ell_\infty$ metric, \cAkNN with approximation factor $c<3$ is similarly hard~\cite{indyk2001approximate}.
(Though Indyk links the lower bound to a different problem, it can be easily checked that the same conclusion follows from the more recent SETH and OVH assumptions.)

\paragraph{Upper bounds}

The I/O complexity of the near neighbor problem was studied by Gionis et al.~\cite{DBLP:conf/vldb/GionisIM99}, focusing on Hamming space. (Their approach extends to other spaces that have good locality-sensitive hash functions.)
For every approximation factor $c>1$, they show that $O((n/B)^{1/c})$ I/Os suffices to retrieve \emph{one} near neighbor, using a data structure of size subquadratic in $n$.
It seems that the same algorithm can be adapted to return $k$ near neighbors at an additional cost of $O(k/B)$ I/Os.
Tao et al.~\cite{DBLP:journals/tods/TaoYSK10} extended these results to handle nearest neighbor queries, but they do not consider the case of $k$ nearest neighbors.

For the Euclidean and Hamming metrics with constant approximation factor $c>1$ it is known how to get $n^{o(1)}$ query time for the near neighbor problem ($k=1$) with a polynomial space data structure (see~\cite{andoni2017} and its references).
For $c>\sqrt{3}$ Kapralov~\cite{kapralov2015smooth} even showed how to achieve this by a \emph{single} probe to the data structure, returning a pointer to the result.

\paragraph{Organization:} The rest of this paper is organized as follows. In Section $2$ we develop notation and proceed in Section 2.1 to extend the indexability result of \cite{hellerstein1995} to approximate indexing schemes. Section $3$ describes the indexing scheme promised in Theorem~\ref{generalupper}, followed by Section $4$ that proves lower bounds in the L-infinity metric (Theorem~\ref{linfinitylower}). Section $5$ contains our results on the Hamming metric (Theorem~\ref{hamminglower} and Theorem~\ref{hammingupper}), followed by conclusion and open problems in Section $6$.

\section{Preliminaries}\label{sec:preliminaries}

The external memory %, or DAM (disk access machine) 
model of computation (due to Aggarwal and Vitter \cite{aggarwal}) has a main memory of size $M$ and an infinite external memory, both realized as arrays.
Data is stored in external memory, and is transferred to/from main memory (where computation happens) in I/Os, or \emph{block transfers}, where a block holds $B$ data items. 
Computation in main memory is free --- the cost of an algorithm is the number of I/Os it performs.

We will be using the following definitions from Hellerstein et al.~\cite{indexability1,indexability2}.
For brevity we will refer to a subset of $I$ of size $b$ as a \emph{$b$-subset}.
\begin{definition}[Workload]
A workload $W = (D,I,\mathcal{Q})$ consists of a non-empty set $D$ (the \textit{domain}), a nonempty finite set $I \subseteq D$ (the \textit{instance}, whose size we denote by $n$), and a set $\mathcal{Q}$ of subsets of $I$ (the \textit{query set}).
\end{definition}

\begin{definition}[Indexing Scheme]
An indexing scheme $S = (W,\mathcal{B})$ consists of a workload $W = (D,I,\mathcal{Q})$ and a set $\mathcal{B}$ of $B$-subsets of $I$, where $B$ is the \emph{block size} of the indexing scheme.
%such that $B$ is some positive integer and $\mathcal{B}$ \emph{covers} $I$, meaning that every query in $\mathcal{Q}$ is a subset of .
\end{definition}

% Basically, an indexing scheme is a way to lay out the data on the disk in blocks so as to answer queries efficiently. In the indexability model, once the query $q \in \mathcal{Q}$ arrives, there is an oracle that tells the algorithm exactly which blocks on disk contain the elements in $q$ for free (here we use $q$ both for the query and for its ``answers'', as we don't have to search for them). Ideally, we want indexing schemes that are space-efficient and do not perform too many I/Os to bring in these $|q|$ elements from memory. This is captured by the following two performance parameters.

% \begin{definition}[Redundancy]
% The redundancy $r(x)$ of a data item $x \in I$ is defined as the number of blocks $b$ in the block set $\mathcal{B}$ that contain $x$:

% $$ r(x) = \vert \{ b \in \mathcal{B} : x \in b \} \vert.$$

% The redundancy $r$ of the indexing scheme $S$ is defined as the average of $r(x)$ over all objects $x$, i.e., $r = \frac{1}{N}\sum_{x \in I}r(x)$. If the space used by the indexing scheme is $S$ blocks, $S = r N/B$.

% \end{definition}

% The redundancy ranges from $1$ (with $S= N/B$ blocks, the minimum required to store the data), to ${N \choose B}/ (N/B)$ (corresponding to $S = {N \choose B}$, where every $B$-combination of the input data is written out to a block).

\begin{definition}[Cover set]
 A cover set $C_Q \subseteq \mathcal{B}$ for a query $Q \in \mathcal{Q}$ is a minimum-size subset of the blocks such that $Q \subseteq \bigcup\limits_{b \in C_Q} b$.
\end{definition}

We will assume that the blocks are chosen such that $C_Q$ exists for every query $Q\in \mathcal{Q}$.

% \begin{definition}
% The access overhead $A(q)$ of a query $q \in \mathcal{Q}$ is defined as:

% $$ A(q) = \frac{C_q}{ \lceil |q|/B \rceil} .$$
% \end{definition}

% \begin{definition}[Access Overhead]
% The access overhead $A$ for an indexing scheme $S$ is defined as  $A = \max\limits_{q \in \mathcal{Q}} A(\mathcal{Q}) $.
% \end{definition}

% Given a query $q$, any indexing scheme must make at least$\lceil |q|/B \rceil$ I/Os to answer $q$. Intuitively, if the access overhead of an indexing scheme is $A$, then for some query $q$, the scheme requires $A \lceil |q|/B \rceil$ I/Os to answer $q$. 

\begin{definition}[$\lambda$-Set Workload]
The $\lambda$-set workload is a workload with instance $I = \{1, \cdots, n\}$ whose
query set $\mathcal{Q}$ is the set of all $\lambda$-subsets of the instance.
\end{definition}

While~\cite{indexability1,indexability2} measure performance in terms of \emph{redundancy} and \emph{access overhead}, we find it more natural to define performance in the same way as the I/O model.
The \emph{space usage} of an indexing scheme is the number of blocks, $|\mathcal{B}|$.
The \emph{query time} of $Q \in \mathcal{Q}$ is the size of a cover set,~$|C_Q|$.
Observe that the time of a query of size $\lambda$ can range from $\lceil \lambda/B \rceil$ I/Os to $\lambda$ I/Os, depending on how many blocks are needed to cover it.

All static data structures we know of in the external memory model with a given space usage and query cost translate directly to an indexing scheme with the same, or better, performance.
This is because these data structures store, or can be adapted to store, $O(B)$ vectors explicitly in each block, such that the set of result vectors is explicitly present in the blocks that are read when answering a query.
This means that any lower bounds in the indexability model strongly suggest lower bounds also in the external memory model.
%We are only aware of one external memory data structure that does not seem to have corresponding indexing scheme, namely the dictionary of Iacono and Patrascu. Hmm... this is dynamic.
%Note that the indexing scheme knows $\lambda$; in general, an indexing scheme is aware of the exact workload.

% \begin{theorem}\label{setworkloadtheorem}\cite[Theorem~7.2]{indexability1}
% For $\lambda$-set workload $K_{n,\lambda}(I,\mathcal{Q}), B \geq \lambda$, any indexing scheme with redundancy

% $$ 
% r < \frac{n - \lambda + 1}{(\lambda -1)(B - 1)}
% $$

% has worst possible access overhead $A = \lambda$.
% \end{theorem}

% The above theorem says that if an indexing scheme is to avoid the worst possible access overhead, it must use space at least quadratic in $N/B$. 

%\subsection{$c$ $k$-Nearest Neighbor Problem}
%
%\begin{definition} 
%\textbf{$k$-Nearest Neighbor Problem}: Given a set $P=\{p_{1}, \cdots, p_{n}\}$ of $n$ points, where $p_{i}$ belongs to a metric space $M = (X, d)$ , and a query $q$, return a set $K = \{t_1,  \cdots ,t_k \}$ of $k$ points from $P$ such that for any $p \in P \setminus K$ then $d(q, t_i) < d(q, p)$, for all $1 \leq i \leq k$
%\end{definition}

\subsection{Approximate indexing schemes}

\begin{definition}[$c$-approximate $k$-nearest neighbors problem, \cAkNN] Let $M$ be a metric space with distance function $d(\cdot,\cdot)$, and $c>1$ be a constant. Given $P \subset M$, $\vert P \vert= n$, and a positive integer $k \leq n$, construct a data structure that upon receiving a query $q$, returns $Q \subset P$, $Q=\{q^{1},\cdots,q^{k}\}$, such that for all $q^{i} \in Q$, $d(q,q^{i}) \leq c\, d(q, p^{k}_{q})$, where $p^{k}_{q} \in P$ is the $k$th nearest neighbor of $q$ in $P$.
\end{definition}

We will consider indexing schemes for \cAkNN that can depend on the parameters $c$ and $k$, i.e., the parameters are known when the data structure is constructed.
The set of points to be stored is denoted by $P$.
Observe that for \cAkNN, there can be at most $\binom{n}{k}$ distinct query answers ($\lambda=k$ in the $\lambda$-set workload above). Intuitively, the more queries there are in a workload, the higher is the space needed by the data structure. 
But in the \cAkNN problem, for a query $q$ there is no one ``right'' answer, as any set of $k$ approximate near neighbors form a valid output. 
Thus we find that the definitions in the original indexability model need to be extended to capture approximate data structures
It turns out that even though approximate near neighbors are allowed, it is possible to ensure that at least half of the $k$ returned points are among the $k$ nearest to the query point.
This motivates the following definition:

\begin{definition}[Relaxed $\lambda$-set workload]
The relaxed $\lambda$-set workload  is a workload whose instance is $\{1, \cdots, n\}$. The query set is the set of all $\binom{n}{\lambda}$ subsets of size $\lambda$. Given a query $q$ corresponding to such a $\lambda$-subset $Q$, the indexing scheme must report $\lambda$ elements, at least half of which must come from $Q$.
\end{definition}

%In the above relaxed set workload we only require the indexing scheme to report half of the required exact query output. 
We next show a space-I/O tradeoff for the relaxed workload. 

\begin{lemma}[Relaxed set workload tradeoff]\label{relaxedtradeoff}
  Any indexing scheme for the relaxed $\lambda$ set workload with a space usage of $s$ blocks and a query time of $t$ block accesses
  must have
  \[
   t \geq \lambda\log_s \left(\frac{1}{2e}\sqrt{\frac{n}{tB}}\right)
  \]
\end{lemma}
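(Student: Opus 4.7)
The plan is a direct double-counting argument on the map from queries to cover sets, in the spirit of the original $\lambda$-set workload lower bound but tuned to the extra freedom that approximation offers. Fix an indexing scheme with $s$ blocks and worst-case query time $t$. For each query $Q \in \binom{[n]}{\lambda}$ the scheme produces a specific $\lambda$-element answer $A_Q$ with $A_Q \subseteq \bigcup C_Q$ and $|A_Q \cap Q| \geq \lambda/2$; here $C_Q$ is the associated cover set, which I pad to have exactly $t$ blocks if it is smaller. The key structural observation is that $A_Q \subseteq \bigcup C_Q$ combined with the ``at least half'' guarantee forces
\[
|Q \cap \textstyle\bigcup C_Q| \;\geq\; \lambda/2,
\]
so each query must deposit at least half of its elements into the (at most $tB$)-element union $\bigcup C_Q$.

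Next I would bound the number of queries that can be mapped to any single cover set. Fix $t$ blocks $C$ and let $U = \bigcup C$, so $|U| \leq tB$. The number of $\lambda$-subsets $Q \subseteq [n]$ with $|Q \cap U| \geq \lambda/2$ is at most $\binom{tB}{\lambda/2}\binom{n}{\lambda/2}$: pick any $\lambda/2$ elements of $Q$ from $U$, then complete $Q$ with any $\lambda/2$ further elements of $[n]$; this counts every valid $Q$ at least once and is a crude but sufficient upper bound. Combined with $\binom{s}{t} \leq s^t$ choices of cover set, this yields
\[
\binom{n}{\lambda} \;\leq\; s^t\, \binom{tB}{\lambda/2}\binom{n}{\lambda/2}.
\]
Plugging in $\binom{tB}{\lambda/2} \leq (2etB/\lambda)^{\lambda/2}$, $\binom{n}{\lambda/2} \leq (2en/\lambda)^{\lambda/2}$, and $\binom{n}{\lambda} \geq (n/\lambda)^{\lambda}$ makes the $\lambda$-factors collapse and reduces the inequality to $\bigl((2e)^{-1}\sqrt{n/(tB)}\bigr)^{\lambda} \leq s^{t}$. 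Taking $\log_s$ of both sides gives the claimed bound.

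The substantive step is the queries-per-cover-set count: in the classical $\lambda$-set workload the cover set must capture all of $Q$, which yields the much tighter $\binom{tB}{\lambda}$, but under the relaxed workload the scheme is allowed to swap in up to $\lambda/2$ impostors, so we can only force a $\lambda/2$-intersection between $Q$ and $\bigcup C_Q$. This weakening is precisely what replaces $n/(tB)$ inside the logarithm by its square root and is the only conceptual obstacle; the remainder is routine binomial algebra.
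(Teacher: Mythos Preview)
Your proposal is correct and follows essentially the same approach as the paper: both arguments establish the inequality $\binom{n}{\lambda} \le \binom{s}{t}\binom{tB}{\lambda/2}\binom{n}{\lambda/2}$ via the observation that the cover set must contain at least $\lambda/2$ elements of the query, and then reduce it with the standard bounds $(n/k)^k \le \binom{n}{k} \le (en/k)^k$. The only cosmetic difference is that you phrase the count as ``queries per fixed cover set'' while the paper phrases it as ``number of $(t\text{-block set},\lambda/2\text{-subset},\lambda/2\text{-completion})$ tuples''; these are two views of the same double count.
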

\begin{proof}
 
 Recall that by the relaxation, the indexing only needs to find a subset~$\hat Q \subset Q$ of size $|\hat Q| =\lambda/2$ in the index. In other words, with each subset $\hat Q$ that can be retrieved by the index, the algorithm can add any $\lambda/2$ elements and arrive at a valid query~$Q$.

  We upper bound the total number of distinct $\lambda$-sets reported by an indexing scheme using $s$ space and $t$ query I/Os as follows:
  \begin{itemize}
  \item choose the set of $t$ blocks to retrieve from the index ($\binom{s}{t}$ choices),
  \item choose the $\lambda/2$ elements to use from these at most $tB$ distinct elements (at most $\binom{{tB}}{{\lambda/2}}$ choices),
  \item choose $\lambda/2$ arbitrary \emph{other} elements (at most $\binom{n}{{\lambda/2}}$ choices).
  \end{itemize}
  
  The total number of such combinations should be at least $\binom{n}{\lambda}$, which is the possible set of queries, or $\lambda$-subsets.
  Using the inequalities $\left(\frac{n}{k}\right)^{k} \leq \binom{n}{k} \leq \left(\frac{en}{k}\right)^{k} \leq n^{k}$ this gives us:

\begin{eqnarray}
\binom{s}{t} \binom{tB}{\lambda/2} \binom{n}{\lambda/2} &\geq& \binom{n}{\lambda}  \label{eqn:tradeoff} \\
\Rightarrow s^t \left(\frac{tBe}{\lambda/2}\right)^{\lambda/2} \left(\frac{ne}{\lambda/2}\right)^{\lambda/2}
  &\geq& \left(\frac{n}{\lambda}\right)^\lambda \notag\\
\Rightarrow s^{2t/\lambda} \left(\frac{tBe}{\lambda/2}\right) \left( \frac{ne}{\lambda/2}\right) &\geq& \left(\frac{n}{\lambda}\right)^2  \notag \\
\Rightarrow s^{2t/\lambda} & \geq  & %\frac{n^2}{{\lambda}^2}\frac{{\lambda}^2}{4e^2tBn} &=& 
\frac{n}{4e^2tB} \notag \\ 
\Rightarrow t &\geq& \lambda\log_s \left(\frac{1}{2e}\sqrt{\frac{n}{tB}}\right). \notag
\end{eqnarray}

% \begin{equation}\label{eqn:tradeoff}
% \binom{s}{t} \binom{tB}{\lambda/2} \binom{n}{\lambda/2} \geq \binom{n}{\lambda}
% \end{equation}

% \[
%   s^t \left(\frac{tBe}{\lambda/2}\right)^{\lambda/2} \left(\frac{ne}{\lambda/2}\right)^{\lambda/2}
%   \geq \left(\frac{n}{\lambda}\right)^\lambda
% \]

% \[
% s^{2t/\lambda} \left(\frac{tBe}{\lambda/2}\right) \left( \frac{ne}{\lambda/2}\right) \geq \left(\frac{n}{\lambda}\right)^2
% \]
% \[
% s^{2t/\lambda} \geq \frac{n^2}{{\lambda}^2}\frac{{\lambda}^2}{4e^2tBn} = \frac{n}{4e^2tB} 
% \]
% \[
%   t>\lambda\log_s \left(\frac{1}{2e}\sqrt{\frac{n}{tB}}\right)
% \]
\end{proof}

One may be interested in a query time of $\lceil \lambda/\alpha \rceil$ I/Os, where $1 \leq \alpha \leq B$. 

\begin{corollary}\label{relaxedtradeoffpolynomial}
Any indexing scheme for the relaxed $\lambda$-set workload with worst case query time of $\lceil \lambda/\alpha \rceil$ I/Os, where $1 \leq \alpha \leq B$ must use $\Omega \left(\left(\frac{1}{2e}\sqrt{\frac{n \alpha}{\lambda B}} \right)^{\alpha}\right)$ blocks of space.
\end{corollary}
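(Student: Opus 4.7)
The plan is to derive this corollary as a direct consequence of Lemma~\ref{relaxedtradeoff} by substituting the hypothesized query time $t = \lceil \lambda/\alpha \rceil$ into the tradeoff inequality and then inverting it to isolate the space $s$ on one side.

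Concretely, I would first plug $t = \lceil \lambda/\alpha \rceil$ into the bound
\[
 t \geq \lambda \log_s\left(\frac{1}{2e}\sqrt{\frac{n}{tB}}\right)
\]
from Lemma~\ref{relaxedtradeoff}. On the left-hand side, dividing through by $\lambda$ yields $t/\lambda \leq 1/\alpha$ up to a lower-order additive term coming from the ceiling (the ceiling at most adds $1$ to $t$, which is absorbed into the asymptotic constants). On the right-hand side, the factor $tB$ inside the radical becomes $\Theta(\lambda B/\alpha)$, so $\sqrt{n/(tB)}$ becomes $\Theta(\sqrt{n\alpha/(\lambda B)})$, producing the expression that appears in the claimed bound.

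After this substitution the inequality reads, up to constants, $1/\alpha \geq \log_s\bigl(\tfrac{1}{2e}\sqrt{n\alpha/(\lambda B)}\bigr)$. Exponentiating both sides with base $s$ gives $s^{1/\alpha} \geq \tfrac{1}{2e}\sqrt{n\alpha/(\lambda B)}$, and raising both sides to the $\alpha$-th power yields the claimed lower bound
\[
 s \;=\; \Omega\!\left(\left(\frac{1}{2e}\sqrt{\frac{n\alpha}{\lambda B}}\right)^{\!\alpha}\right).
\]

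There is no real obstacle here; the only care required is handling the ceiling (so that replacing $\lceil \lambda/\alpha \rceil$ by $\lambda/\alpha$ only changes constants inside the $\Omega(\cdot)$) and noting that the corollary is informative only in the regime $n\alpha/(\lambda B) = \Omega(1)$, which is the only regime where the conclusion of Lemma~\ref{relaxedtradeoff} itself is non-vacuous.
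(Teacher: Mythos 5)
Your proposal matches the paper's own (implicit) derivation: the paper states this corollary as an immediate consequence of Lemma~\ref{relaxedtradeoff}, obtained exactly as you do by substituting $t=\lceil \lambda/\alpha\rceil$ into the tradeoff and solving for $s$, which for $t=\lambda/\alpha$ gives $s \geq \left(\frac{1}{2e}\sqrt{\frac{n\alpha}{\lambda B}}\right)^{\alpha}$. Your remarks on absorbing the ceiling and on the regime where the bound is non-vacuous are at least as careful as anything in the paper, so there is nothing to add.
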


This lower bound is essentially tight: Achieving a query time of $\lceil \lambda/\alpha \rceil$ I/Os using $O(\binom{n}{\alpha})$ space is easy by just preprocessing all $\alpha$ combinations of the $n$ items. By doing an analogous calculation for the exact workload, we are able to give the following tradeoff for the standard $\lambda$-set workload.

%\rasmus{In other places we phrase this as an I/O lower bound for fixed polynomial space $n^c$. Do the same here?}
%\mayank{Tried to do that above. Please take a look.}
\begin{lemma}[$\lambda$-set workload tradeoff]\label{relaxedAlphaTO}
Any indexing scheme for the $\lambda$-set workload with worst case query time of $\lceil \lambda/\alpha \rceil$ I/Os, where $1 \leq \alpha \leq B$ must use $\Omega \left(\left(\frac{n \alpha}{e \lambda B} \right)^{\alpha}\right)$ blocks of space.
\end{lemma}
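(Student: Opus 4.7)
The plan is to mirror the counting argument of Lemma~\ref{relaxedtradeoff}, but without the relaxation slack: for the exact $\lambda$-set workload every reported $\lambda$-subset must coincide with the queried set $Q$, so the middle factor in the counting product becomes $\binom{tB}{\lambda}$ rather than $\binom{tB}{\lambda/2}\binom{n}{\lambda/2}$.

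First, I would bound the number of distinct query answers producible by any indexing scheme with space $s$ blocks and query time $t$ I/Os. There are $\binom{s}{t}$ ways to choose the set of $t$ blocks to read, and then $\binom{tB}{\lambda}$ ways to pick the $\lambda$ reported elements from the at most $tB$ elements held by those blocks. Since the query set is all $\binom{n}{\lambda}$ subsets of size $\lambda$, every such subset must be produced in this way, so
\[
  \binom{s}{t}\binom{tB}{\lambda} \;\geq\; \binom{n}{\lambda}.
\]

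Next I would set $t = \lceil \lambda/\alpha\rceil$ and apply the standard estimates $\binom{n}{k}\leq (en/k)^{k}$ to the two binomials on the left and $\binom{n}{k}\geq (n/k)^{k}$ to the one on the right. Using $tB \leq \lambda B/\alpha + B$ and absorbing the additive $O(B)$ into the implicit constants, the inequality reduces to
\[
  s^{\lambda/\alpha}\bigl(eB/\alpha\bigr)^{\lambda} \;\geq\; \bigl(n/\lambda\bigr)^{\lambda}.
\]
Taking $\lambda$th roots gives $s^{1/\alpha}\cdot eB/\alpha \geq n/\lambda$, and raising to the $\alpha$th power yields the claimed bound $s = \Omega\bigl((n\alpha/(e\lambda B))^{\alpha}\bigr)$.

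The argument is essentially routine once the counting scheme from Lemma~\ref{relaxedtradeoff} is in place; the only thing to watch is the ceiling in $t = \lceil\lambda/\alpha\rceil$, which forces a small amount of care about whether $\lambda/\alpha$ is an integer and about the additive $O(B)$ in $tB$. Both are handled by absorbing lower-order terms into the hidden constants, so no new ideas beyond those used for the relaxed workload are needed. The resulting bound is tighter than the relaxed one by exactly the factor contributed by the ``free'' $\binom{n}{\lambda/2}$ choices that the relaxation affords the scheme, which is the expected qualitative gap between the two workloads.
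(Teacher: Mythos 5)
Your proposal is correct and follows essentially the same route as the paper, which gives no separate proof of this lemma but states that it follows ``by doing an analogous calculation for the exact workload'' --- i.e.\ exactly your counting bound $\binom{s}{t}\binom{tB}{\lambda}\geq\binom{n}{\lambda}$ with the relaxation's extra $\binom{n}{\lambda/2}$ factor dropped, followed by the same binomial estimates. The only looseness, treating the ceiling in $t=\lceil\lambda/\alpha\rceil$ as a constant-factor issue, is shared with the paper's own statement of Corollary~\ref{relaxedtradeoffpolynomial} and is harmless in the intended asymptotic reading.
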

%\riko{I guess there is a gap of a factor of 2 in the exponent of the space, right?}

Lemma~\ref{relaxedAlphaTO} generalizes Theorem 7.2 in \cite{indexability1} which considered the case of $\alpha =1$.

\section{A 3-approximation indexing scheme for general metrics}\label{sec:3apx}

We prove Theorem~\ref{generalupper} in this section, that asserts that (a relatively simple) indexing scheme provides a $3$-approximation for the $k$-NN problem in any metric space. Note that we are only presenting an indexing scheme as opposed to a data structure; i.e., we  assume that once the query is given, an oracle provides the smallest set of blocks that contain a valid answer to the query.
%the list of the exact $k$ nearest neighbors. One must then go and fetch blocks containing $3$-approximate $k$ nearest neighbors.

Let $P$ be a set of $n$ points in a metric space. Consider the indexing scheme $I$ that consists of the set of $\ell=n\lceil k/B \rceil$ blocks $\mathcal{B} = \{ b_1, b_2, ... , b_\ell \}$ where we store $p_i \in P$ and the set $P_{i} = \{ p_{i,1}, p_{i,2}, ..., p_{i,k}\}$ of the $k$-nearest neighbors of the point $p_i$ (including itself). 
This requires $\lceil k/B \rceil$ blocks per element of $P$, as claimed.
%
%Upon learning the nearest neighbor $p^{1}_{q}$
%\riko{did we introduce this notation? Do we have to change to $p^{(1)}$ to avoid confusion with exponentiation?}
For a query point $q$ let $p_{i^*}$ be the nearest neighbor of $q$.
The oracle then returns the set $P_{i^*}$, i.e., the $k$ nearest neighbors of $q$'s nearest neighbor, using $\lceil k/B \rceil$ I/Os.

\begin{theorem}
Let $\{p^{q}_{1}, \cdots,p^{q}_{k}\}$ be the set of the exact $k$ nearest neighbors of a query $q$, and let $\{p_{i^*,1},\cdots p_{i^*,k}\}$ be the set of points returned by the indexing scheme described above. 
Then for any $1 \leq j \leq k$, $d(q,p_{i^*,j}) \leq 3 d(q,p^{q}_{k})$. 
That is, all the returned points are within a factor $3$ of the distance of the query to its $k$th nearest neighbor. 
%Furthermore, this indexing scheme reports $k$ approximate near neighbors in $\lceil k/B \rceil$ I/Os.
\end{theorem}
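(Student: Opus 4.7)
The plan is to argue by two applications of the triangle inequality, using the fact that $p_{i^*}$ is close to $q$ to control how far $p_{i^*}$'s $k$-th nearest neighbor can be.

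Set $r = d(q, p^q_k)$, the distance from the query to its $k$-th true nearest neighbor, so that the closed ball $B(q, r)$ contains at least $k$ points of $P$, namely $p^q_1, \dots, p^q_k$. Since $p_{i^*}$ is \emph{the} nearest neighbor of $q$ (in particular closer to $q$ than $p^q_k$), we have $d(q, p_{i^*}) \leq r$; this is the first ingredient.

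The second ingredient is a bound on $d(p_{i^*}, p_{i^*, k})$, i.e., the distance from $p_{i^*}$ to \emph{its own} $k$-th nearest neighbor. For each $p^q_j$ ($1 \le j \le k$), the triangle inequality gives
\[
  d(p_{i^*}, p^q_j) \;\leq\; d(p_{i^*}, q) + d(q, p^q_j) \;\leq\; r + r \;=\; 2r.
\]
Thus the closed ball $B(p_{i^*}, 2r)$ contains at least $k$ points of $P$, so the $k$-th nearest neighbor of $p_{i^*}$ satisfies $d(p_{i^*}, p_{i^*, k}) \leq 2r$. In particular, for every $1 \le j \le k$, $d(p_{i^*}, p_{i^*, j}) \leq 2r$.

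Finally, combine the two ingredients by a second triangle inequality: for any $1 \le j \le k$,
\[
  d(q, p_{i^*, j}) \;\leq\; d(q, p_{i^*}) + d(p_{i^*}, p_{i^*, j}) \;\leq\; r + 2r \;=\; 3r,
\]
which is exactly the desired $3$-approximation bound. There is no real obstacle here; the only subtlety worth highlighting is that the bound $d(p_{i^*}, p_{i^*, k}) \leq 2r$ uses a counting argument (at least $k$ points in $B(p_{i^*}, 2r)$), not just the inequality for a single point, and this is what makes the scheme work uniformly for all $k$ returned points rather than only the closest one.
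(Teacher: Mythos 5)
Your proof is correct, and it takes a genuinely different (and cleaner) route than the paper. The paper argues by a four-way case analysis, splitting according to whether $q$ and $p^q_k$ lie inside or outside the ball $D$ centered at $p_{i^*}=p^q_1$ whose radius is $d(p_{i^*},p_{i^*,k})$, and in each case chaining triangle inequalities tailored to that configuration (obtaining $2\,d(q,p^q_k)$ in some cases and $3\,d(q,p^q_k)$ in others). You instead bound the radius of that ball once and for all: writing $r=d(q,p^q_k)$, every true $k$-nearest neighbor $p^q_j$ satisfies $d(p_{i^*},p^q_j)\leq d(p_{i^*},q)+d(q,p^q_j)\leq 2r$, so $B(p_{i^*},2r)$ contains at least $k$ points of $P$ and hence $d(p_{i^*},p_{i^*,j})\leq 2r$ for all $j\leq k$; a single further triangle inequality then gives $d(q,p_{i^*,j})\leq r+2r=3r$. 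The counting step you highlight (at least $k$ points of $P$ in $B(p_{i^*},2r)$, not just one) is exactly the right ingredient and is what replaces the paper's case distinctions; your argument is shorter, avoids the somewhat delicate bookkeeping of which points are inside or outside $D$, and still matches the factor $3$, which the paper's tightness example shows cannot be improved for this scheme.
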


\begin{proof}
%Note that $\{q^{1},\cdots q^{k}\}$ is exactly the set of the $k$ nearest neighbors of $p^{1}_{q}$. 
The proof is a case analysis. Let $D$ be the smallest ball centered at $p^{q}_{1} := p_{i^*}$ that contains $p_{i^*,k}$ (the ``$k$ nearest neighbor disk'' of $p^{i^*}$). The cases are:
\begin{enumerate}
    \item Both $q$ and $p^{q}_{k}$ are outside $D$. For each $j$, we have that 
    \begin{align*}
    d(q, p_{i^*,j}) &\leq d(p_{i^*,j},p^{q}_{1}) + d(p^{q}_{1},q) \ \ \ \text{(triangle inequality)}\\
                &\leq d(p_{i^*,j}, p^{q}_{1}) + d(q, p^{q}_{k}) \ \ \ \text{($p^{q}_{k}$ is farther from $q$ than $p^{1}_{q}$)} \\
                &\leq d(q,p^{q}_{1}) + d(q, p^{q}_{k}) \ \ \ \text{($q$ is outside $D$)}\\
                &\leq 2\,d(q, p^{q}_{k}) \enspace . 
\end{align*}
\item $q$ is inside $D$, and $p^{q}_{k}$ is outside $D$. For each $j$, we have that 
  \begin{align*}
    d(q, p_{i^*,j}) &\leq d(p_{i^*,j},p^{q}_{1}) + d(p^{q}_{1},q) \ \ \ \text{(triangle inequality)}\\
                & \leq d(p_{i^*,j},p^{q}_{1}) + d(q, p^{q}_{k}) \ \ \ \text{($p^{q}_{k}$ is farther from $q$ than $p^{q}_{1}$)}\\
                &\leq d(p^{q}_{k}, p^{q}_{1}) + d(q, p^{q}_{k}) \ \ \ \text{($p^{q}_{k}$ is outside $D$)}  \\
                &\leq d (p^{q}_{k}, q) + d(q,p^{q}_{1}) + d(q, p^{q}_{k}) \ \ \ \text{(triangle inequality)}\\
                &\leq 3\, d(q, p^{q}_{k}) \enspace .
\end{align*}
\item Both $q$ and $p^{q}_{k}$ are inside $D$. In this case, we first claim that it is sufficient to consider that for some $i \notin \{1,k\}$, $p^{q}_{i}$ is outside $D$. If that was not the case, we have reported all $\{p^{q}_{i}\}$, i.e., the exact $k$ nearest neighbors of $q$. For each $j$ we now have that
 \begin{align*}
        d(q, p_{i^*,j}) &\leq d(q,p^{q}_{1}) + d(p^{q}_{1}, p_{i^*,j}) \ \ \ \text{(triangle inequality)} \\
                    &\leq d(q, p^{q}_{k} ) + d(p^{q}_{1} , p_{i^*,j}) \ \ \ \text{($ p^{q}_{k}$ is farther from $q$ than  $p^{q}_{1}$)} \\
                    &\leq d(q, p^{q}_{k} ) + d(p^{q}_{1} ,p^{q}_{i}) \ \ \ \text{($ p^{q}_{i}$ is outside $D$)} \\
                    &\leq d(q, p^{q}_{k} ) + d(p^{q}_{1} ,q) + d(q,p^{q}_{i}) \ \ \ \text{(triangle inequality)} \\
                    & \leq 3\, d(q, p^{q}_{k}) \enspace .
    \end{align*}
\item $q$ is outside $D$, and $p_{k}^{q}$ is inside $D$. In this case we have that 
    \begin{align*}
                  d(q, p_{i^*,j}) &\leq d(p_{i^*,j},p^{q}_{1}) + d(p^{q}_{1},q) \ \ \ \text{(triangle inequality)}\\
                  & \leq d(q,p^{q}_{1}) + d(p^{q}_{1},q) \ \ \text{($q$ is outside $D$; $p_{i^*,j}$ is inside)}\\
                    & \leq 2d(q,p_{k}^{q}) \ \ \ \text{($p^{q}_{k}$ is farther from $q$ than $p^{q}_{1}$)}\\
                \end{align*}
\end{enumerate}
\end{proof}

\paragraph{Tightness of the analysis:} 
Next, we give an example of $n=k+1$ points where the indexing scheme above does not achieve a better approximation factor than $3(1-\epsilon)$, for arbitrary $\epsilon>0$.
Consider the scenario when all points including the query lie on a line: 
$$q = 1/2,\quad p^{q}_{1}= 0, \quad p^{q}_{k}=1+\epsilon/2, \quad p^{q}_{k+1} = -1, \quad \text{and}$$
$$p^{q}_{i} = -\epsilon i/2k \text{ for all } i \notin \{1,k,k+1\} \enspace . $$
The disk $D$ contains all points but $p^{q}_{k}$, $p^{q}_{k+1}$ will be reported, and $d(q,p^{q}_{k+1}) = 3/2 > 3(1-\epsilon)(1/2+\epsilon/2) = 3(1-\epsilon)d(q,p_{k}^{q})$.

\section{Lower bound for L-infinity metric}

In this section we prove Theorems~\ref{linfinitylower} and~\ref{hamminglower}.

\subsection{Lower bound for the L-infinity metric}

We warm up with a proof of a lower bound in $n$ dimensions, a slight variant of a reduction by Indyk~\cite{indyk2001approximate}. 

\begin{lemma}\label{3apxLBhighdimSetWorkload}
  Consider the approximate $k$-NN problem in $\mathbb{R}^{n}$ with $\ell_\infty$ metric.
  There exists a set of~$n$ points $p_i$ for $i\in \{1,\ldots,n\}$ that has a $k$-set workload. Specifically,
  for every subset $I\subset [n]$ there exists a query point $q_I$ such that the $\ell_{\infty}$-distance is
  \[
    \|p_i-q_I\|_\infty =
    \left\{\begin{array}{lr}
        1/2 & \text{for } i \in I\\
        3/2 & \text{for } i \not\in I
        \end{array}\right. \;.
  \]
  Hence, the \cAkNN\ for $c<3$ in $n$-dimensional $\ell_\infty$  space leads to a $k$-set workload.
\end{lemma}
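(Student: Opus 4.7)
The plan is to give an explicit construction in $\mathbb{R}^n$ and then verify the two distance conditions by a direct coordinate-by-coordinate computation. I would take the data points to be the standard basis vectors, $p_i := e_i$, so that $p_i$ has a $1$ in coordinate $i$ and a $0$ in every other coordinate. For each subset $I\subseteq [n]$, I would define the query point $q_I\in\mathbb{R}^n$ by
\[
(q_I)_j \;=\; \begin{cases} +1/2 & \text{if } j\in I,\\ -1/2 & \text{if } j\notin I.\end{cases}
\]

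The key computation is then to check the two cases in the statement. If $i\in I$, coordinate $i$ contributes $|1-1/2|=1/2$ and every other coordinate $j\neq i$ contributes $|0-(q_I)_j|=1/2$, so $\|p_i-q_I\|_\infty=1/2$. If $i\notin I$, coordinate $i$ contributes $|1-(-1/2)|=3/2$ while every other coordinate still contributes $1/2$, so $\|p_i-q_I\|_\infty=3/2$. This establishes the displayed equation.

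Given these distances, I would finish by extracting the $k$-set workload. Fix any $k$-subset $I\subseteq[n]$ and consider $q_I$: the $k$ points $\{p_i : i\in I\}$ all lie at distance $1/2$ from $q_I$, so the distance to the $k$-th nearest neighbor of $q_I$ is $r=1/2$, while every point outside $I$ lies at distance $3/2=3r$. For any approximation factor $c<3$, a valid $c$-approximate answer must consist of $k$ points at distance at most $cr<3r$, which forces the output to equal $\{p_i : i\in I\}$ exactly. Hence every $k$-subset of $\{p_1,\dots,p_n\}$ is the unique valid answer to some query, so the workload contains the full $k$-set workload on $n$ elements.

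There is no real obstacle here beyond choosing a clean construction; the only thing that needs care is making sure the ``other coordinates'' never exceed the target distance, which is why $q_I$ uses the symmetric values $\pm 1/2$ rather than, say, $0$ and $1$.
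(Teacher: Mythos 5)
Your construction (data points $p_i=e_i$, query $q_I$ with coordinates $+1/2$ on $I$ and $-1/2$ off $I$) is exactly the one used in the paper, and your coordinate-by-coordinate verification of the distances $1/2$ and $3/2$ and the resulting $k$-set workload for $c<3$ is correct. No issues; this matches the paper's proof, just with the routine checks written out more explicitly.
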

\begin{proof}
  The set consists of the $n$ unit vectors $p_i = e_i$ (where only the $i$th entry is 1 and all other entries are 0).
  For a set~$I\subset \{1,\dots,n\}$ the query vector is defined as
  \[
    q_I = 
    \left\{
      \begin{array}{lr}
        +1/2 & \text{for } i \in I\\
        -1/2 & \text{for } i \not\in I
      \end{array}\right. \;.
  \]
\end{proof}

To deterministically reduce the dimensionality of the space we use an expander and switch to relaxed $k$-set workloads.
Expanders were previously used for deterministic embeddings of Euclidean space into $\ell_1$ by Indyk~\cite{DBLP:conf/stoc/Indyk07}.
There is a vast literature on expanders and the results we are using are standard by now. 
For the sake of concreteness, we take the definitions and precise results almost literally from~\cite{RasmusAnnaExpander02}.
We define $(m, \delta,1/3)$-expander graphs and state some results concerning these graphs. For the rest of this paper we will assume $\delta$ to divide $1/3$, as this makes statements and proofs simpler. This will be without loss of
generality, as the statements we show do not change when rounding~$\delta$ down to the nearest such value.
Let $G = (U, V, E)$ be a bipartite graph with left vertex set $U$ , right vertex
set $V$, and edge set $E$.
We denote the set of neighbors of a set $S \subseteq U$ by
$\Gamma  (S) = \displaystyle\cup_{s\in S} \{v \mid (s, v) \in E\}$, and
use $\Gamma(x)$ as a shorthand for $\Gamma (\{x\})$, $x \in U$.
\begin{definition}[Definition 3 of \cite{RasmusAnnaExpander02}]
  A bipartite graph $G = (U, V, E)$ is $\delta$-regular if the degree of all
  nodes in $U$ is $\delta$.
  A bipartite $\delta$-regular graph $G = (U, V, E)$ is an $(m, \delta, 1/3)$-expander
  if for each $S \subseteq U$ with $|S| \leq m$ it holds that $|\Gamma (S)| \geq (1 - 1/3)\delta |S|$.
\end{definition}

\begin{lemma}[Corollary 5 in \cite{RasmusAnnaExpander02}]\label{lem:cor5Exp}
  For every constant $\epsilon > 0$ there exists an
  $(m, \delta, \epsilon)$-expander $G =(U, V, E)$ with $|U | = u$, $\delta = O(\log(2u/m))$ and $|V| = O(m \log(2u/m))$.
%  \item ($|U | = u$, $\delta = O(1)$ and $|V | = O(m (2u/m)^\alpha )$.)
%  \end{itemize}
\end{lemma}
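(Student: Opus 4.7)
The plan is to prove this by the probabilistic method on random bipartite graphs, which is the standard route for such expander existence results. First I would fix $\epsilon > 0$ and pick constants $c_1, c_2$ (depending on $\epsilon$) so that $\delta = \lceil c_1 \log(2u/m) \rceil$ and $|V| = \lceil c_2 m \log(2u/m) \rceil$. I would then generate a random bipartite graph $G=(U,V,E)$ by letting each $u \in U$ independently and uniformly pick a multiset of $\delta$ neighbors in $V$. This produces a $\delta$-regular multigraph; I would handle multi-edges at the end either by showing they are sparse enough to be deleted without hurting the expansion, or by directly observing that our notion of $|\Gamma(S)|$ only cares about distinct right-neighbors.

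Next I would bound, for each fixed $S \subseteq U$ with $|S|=s \le m$, the probability that $|\Gamma(S)| < (1-\epsilon)\delta s$. If this event occurs then all $\delta s$ edges from $S$ land in some set $T \subseteq V$ of size $\lfloor(1-\epsilon)\delta s\rfloor$, so by a union bound over $T$:
\[
\Pr[\,|\Gamma(S)| < (1-\epsilon)\delta s\,]
\;\le\; \binom{|V|}{(1-\epsilon)\delta s}\left(\frac{(1-\epsilon)\delta s}{|V|}\right)^{\delta s}
\;\le\; \left(\frac{e\,|V|}{(1-\epsilon)\delta s}\right)^{(1-\epsilon)\delta s}\left(\frac{(1-\epsilon)\delta s}{|V|}\right)^{\delta s},
\]
which after simplification is at most $\bigl(e^{1-\epsilon}\bigr)^{\delta s}\bigl(\tfrac{(1-\epsilon)\delta s}{|V|}\bigr)^{\epsilon \delta s}$.

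Then I would union-bound over all subsets $S$ of size $s$ using $\binom{u}{s} \le (eu/s)^s$ and sum over $s = 1, \ldots, m$. Taking logarithms and dividing by $s$, it suffices to check that
\[
\log\!\left(\tfrac{eu}{s}\right) + (1-\epsilon)\delta + \epsilon\delta\log\!\left(\tfrac{(1-\epsilon)\delta s}{|V|}\right) \;<\; -\log(2m)
\]
for every $1 \le s \le m$. Plugging in $\delta = \Theta_\epsilon(\log(2u/m))$ and $|V| = \Theta_\epsilon(m\log(2u/m))$, the negative $\epsilon\delta \log(s/|V|)$ term dominates both the $\log(u/s)$ and the linear-in-$\delta$ terms once the hidden constants $c_1, c_2$ are chosen large enough; this is the main technical verification, and the case $s$ close to $m$ is the tight one.

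The step I expect to require the most care is calibrating $c_1$ and $c_2$ so that the inequality holds uniformly across all $s \in \{1,\dots,m\}$: for small $s$ the $\log(u/s)$ term is large and must be beaten by $\epsilon \delta \log(|V|/(\delta s))$, while for $s \approx m$ the expansion ratio $|V|/(\delta s)$ becomes smallest, so the slack shrinks. Once this is dealt with, the total failure probability is less than $1$, so some graph in the support achieves the desired expansion simultaneously for every $S$ with $|S| \le m$; fixing such a graph and discarding duplicate edges (which occur with probability $O(\delta^2/|V|) = o(1)$ per vertex and hence can only lose a $(1-o(1))$ factor in degree, absorbed by tweaking constants) yields the stated $(m,\delta,\epsilon)$-expander.
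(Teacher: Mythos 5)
The paper does not actually prove this statement: it is imported verbatim as Corollary~5 of the cited reference \cite{RasmusAnnaExpander02} (the text even says the definitions and results are taken ``almost literally'' from there), so there is no in-paper proof to compare against. Your probabilistic-method argument is the standard route to such existence results and is essentially sound: random $\delta$-out left-regular bipartite graph, union bound over subsets $S$ of size $s\le m$ and over candidate small neighborhoods $T$ of size $(1-\epsilon)\delta s$, giving the per-set bound $e^{(1-\epsilon)\delta s}\bigl(\tfrac{(1-\epsilon)\delta s}{|V|}\bigr)^{\epsilon\delta s}$, and then calibrating $c_1,c_2$ as functions of $\epsilon$. The handling of multi-edges is also fine (or, more simply, sample the $\delta$ neighbors without replacement; the bound $\Pr[\Gamma(S)\subseteq T]\le(|T|/|V|)^{\delta s}$ still holds and regularity is exact).

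One quantitative slip you should fix: after taking logarithms of the $s$-th union-bound term and dividing by $s$, the target on the right-hand side should be $-\log(2m)/s$ (equivalently, bound the $s$-th term by $2^{-s-1}$ or $1/(2m)$ \emph{before} dividing by $s$), not $-\log(2m)$. As displayed, your sufficient condition is strictly stronger than needed and is actually unsatisfiable in some parameter regimes covered by the lemma: for instance with $m=\Theta(u)$ and $s=\Theta(m)$, the left-hand side is a negative constant depending only on $\epsilon,c_1,c_2$, while $-\log(2m)=-\Theta(\log u)$ tends to $-\infty$, so no choice of constants makes the stated inequality hold uniformly in $s$. With the correct normalization the verification goes through exactly as you describe (small $s$ handled by the $\epsilon\delta\log(|V|/(\delta s))$ term once $c_1$ is large enough, $s\approx m$ handled by taking $c_2/c_1$ large), so this is a bookkeeping repair rather than a conceptual gap.
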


Next, we discuss how to give an analogue of the hard query set in Lemma~\ref{3apxLBhighdimSetWorkload} with $O(k\log n)$ dimensions.
\begin{lemma}
  Let $n,k(n)$ be arbitrary integer parameters. %\riko{for the moment}
  Consider \cAkNN with $c<3$.
  There exists a set~$D$
  of $n$ points in dimension $O(k\log n)$ such that
  for any $I\subset [n]$ with $|I|=k$ there exists a query point $q_I$ such that
  the set $I'= \{p\in D \mid \|p-q_I\|_\infty \le 1/2 \}$ of potential answer points
  has $I\subset I'$ and $|I'|\leq 3k/2$.
\end{lemma}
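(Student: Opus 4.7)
The plan is to use an expander graph to compress the dimension from $n$ (as in Lemma~\ref{3apxLBhighdimSetWorkload}) down to $O(k\log(n/k))$, while preserving the $\pm 1/2$ gap structure of $\ell_\infty$ distances. Applying Lemma~\ref{lem:cor5Exp} with $u = n$ and $m = 2k$, I would obtain a $(2k, \delta, 1/3)$-expander $G = (U, V, E)$ with $|U| = n$, $\delta = O(\log(n/k))$, and $|V| = O(k \log(n/k))$. The point set is $D = \{p_1, \ldots, p_n\} \subset \{0,1\}^{|V|}$, where $p_i$ is the indicator vector of the neighborhood $\Gamma(i) \subseteq V$. For each $k$-subset $I \subset [n]$, define the query $q_I \in \mathbb{R}^{|V|}$ by setting its $v$-th coordinate to $+1/2$ for $v \in \Gamma(I)$ and to $-1/2$ for $v \notin \Gamma(I)$.

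Next I would verify $I \subseteq I'$. For $i \in I$, since $\Gamma(i) \subseteq \Gamma(I)$, every coordinate where $p_i$ is $1$ pairs with a $+1/2$ entry of $q_I$, and every coordinate where $p_i$ is $0$ pairs with either $+1/2$ or $-1/2$; in all cases the absolute difference is exactly $1/2$, so $\|p_i - q_I\|_\infty = 1/2$. The same calculation identifies $I'$ exactly with $\{j \in [n] : \Gamma(j) \subseteq \Gamma(I)\}$, because any $j$ having some $v \in \Gamma(j)\setminus \Gamma(I)$ contributes a coordinate with $p_j(v)=1$ and $q_I(v) = -1/2$, giving $\|p_j - q_I\|_\infty = 3/2$.

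The crux is bounding $|I'| \leq 3k/2$ via the expansion property applied in the contrapositive. Suppose for contradiction that $|I'| > 3k/2$ and pick $J_0 \subseteq I'$ with $|J_0| = \lceil 3k/2 \rceil + 1 \leq 2k$ (for $k$ at least a small constant). By definition of $I'$, $\Gamma(J_0) \subseteq \Gamma(I)$, and since $G$ is $\delta$-regular, $|\Gamma(J_0)| \leq |\Gamma(I)| \leq \delta k$. On the other hand the $(2k,\delta,1/3)$-expansion yields $|\Gamma(J_0)| \geq (2/3)\delta |J_0| > (2/3)\delta (3k/2) = \delta k$, a contradiction.

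I anticipate the only delicate point to be aligning the expansion parameter with the target gap: the loss factor $1/3$ is what makes $(1-1/3)\delta = (2/3)\delta$ cancel the $3k/2$ cleanly and thereby guarantees the $1/2$-vs-$3/2$ separation downstream. Beyond that alignment, and checking the mild arithmetic constraint that $\lceil 3k/2\rceil +1\le 2k$, the argument is a direct deterministic dimension reduction in the spirit of Indyk~\cite{DBLP:conf/stoc/Indyk07} and feeds directly into the relaxed $\lambda$-set workload setup of Lemma~\ref{relaxedtradeoff}.
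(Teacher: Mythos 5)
Your proposal is correct and follows essentially the same route as the paper: the same expander-based point set $p_i = \mathbf{1}_{\Gamma(i)}$, the same query $q_I$ with entries $\pm 1/2$ split along $\Gamma(I)$, and the same identification $I' = \{j : \Gamma(j) \subseteq \Gamma(I)\}$ via the $1/2$ versus $3/2$ gap. The only difference is in the counting step, and it is to your credit: by taking an $(m=2k,\delta,1/3)$-expander and arguing by contradiction on a subset $J_0 \subseteq I'$ of size at most $2k$, you stay within the range where expansion is guaranteed, whereas the paper sets $m=k$ and applies expansion directly to $I \cup I^*$, which can have size up to $3k/2$ --- a minor imprecision that your choice of parameters (at no asymptotic cost in $\delta$ or the dimension) cleanly avoids.
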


\begin{proof}
  Fix $n$ and $k(n)$.
  Let $m=k(n)$ and let $G =(U, V, E)$ be an 
  $(m, \delta=O(\log(n/m)), 1/3)$-expander with $|U|=n$ and $|V|=d=O(k\delta)$.
  For concreteness we take $U=[n]$.

  Construct the set of points $p_1,\dots,p_n$ where
  \[
    (p_i)_j =
    \left\{
      \begin{array}{lr}
        1 & \text{if } j \in \Gamma(i)\\
        0  & \text{otherwise}
      \end{array}\right. \;.
  \]
  Define the query point for set $I$ with $|I|=k$ where
  \[
    (p_I)_j =
    \left\{
      \begin{array}{lr}
        +1/2 & \text{for } j \in \Gamma(I)\\
        -1/2 & \text{otherwise}
      \end{array}\right. \;.
  \]

  It is easy to see that $\|p_i - p_I\|_\infty = 1/2$ for all $i\in I$, so $p_I$ has at least $k$ neighbors at distance~$1/2$.
  It remains to show that this leads to a relaxed $k$-set workload, i.e.,
  that for any $q_I$ any set of 1/2-near points in the set has at least $k/2$ points in common with~$I$.
  Fix a subset~$I$ and consider the query point~$q_I$.
  Let $I' = \{ i \mid \|p_i - p_I\|_\infty\leq 1/2 \}$ be the indices of the points of with distance at most 1/2 to~$q_I$.
  Observe that $I\subseteq I'$ by construction.
  Let $I^*= I'\setminus I$ be the indices of ``unintended near points''.
  Observe that every point $p_i$ that does not fulfill $\Gamma(i)\subset \Gamma(I)$ has $\|p_i - p_I\|_\infty = 3/2$ and cannot be reported since $c<3$.
  Hence
  \[ I^* = \{ i \mid \Gamma(i)\subseteq \Gamma(I) \} \backslash I\;. \]
  Observe
\( |\Gamma(I)|\leq  \delta |I|\) and
\( |\Gamma(I \cup I^*)| \geq (1-1/3) \delta  |I \cup I^*| \), by the definition of~$G$ and 
\( |\Gamma(I \cup I^*)| =  |\Gamma(I)| \) by the definition of~$I^*$.
Combining this we get
\( \delta |I| \geq \tfrac{2}{3} \delta  |I \cup I^*| \),  
or \( |I| \geq \tfrac{2}{3} (|I|+|I^*|) \).
That is,
\( \frac{1/3}{2/3} |I| \geq |I^*|\),
as desired.
Hence we have a relaxed $k$-set workload.
\end{proof}

This means that we can apply Lemma~\ref{relaxedtradeoff} and Corollary~\ref{relaxedtradeoffpolynomial}:
\begin{corollary}(Theorem~\ref{linfinitylower})
  Any indexing scheme for
  \cAkNN in $O(k \log(n/k))$-dimensional $\ell_\infty$ space with $c<3$ 
  with worst case query time of $\lceil k/\alpha \rceil$ I/Os, where $1 \leq \alpha \leq B$ must use $\Omega \left(\left(\frac{1}{2e}\sqrt{\frac{n \alpha}{k B}} \right)^{\alpha}\right)$ blocks of space.
\end{corollary}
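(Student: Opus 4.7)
The plan is to combine the geometric embedding from the preceding lemma with the combinatorial space--I/O tradeoff established in Corollary~\ref{relaxedtradeoffpolynomial}; the result is a short ``gluing'' argument.

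First, I would invoke the preceding lemma with the given $n$ and $k$ to produce a set $D$ of $n$ points in $\ell_\infty^{d}$ with $d = O(k \log(n/k))$ together with a query point $q_I$ for every $k$-subset $I \subseteq [n]$. The lemma guarantees that the set $I' = \{i : \|p_i - q_I\|_\infty \le 1/2\}$ of points within distance $1/2$ of $q_I$ satisfies $I \subseteq I'$ and $|I'| \le 3k/2$, while every other point of $D$ lies at $\ell_\infty$ distance $3/2$ from $q_I$. Since $c < 3$, a \cAkNN algorithm on input $q_I$ is forced to return $k$ points drawn entirely from $I'$; because $|I' \setminus I| \le k/2$, any such output necessarily contains at least $k/2$ elements of $I$.

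Next, I would observe that this is exactly the promise demanded by the relaxed $k$-set workload on ground set $[n]$. Any indexing scheme $S$ for \cAkNN on $D$ with worst-case query time $\lceil k/\alpha\rceil$ I/Os therefore induces, via the same block collection and the same cover sets, an indexing scheme of identical space and query time for the relaxed $k$-set workload: the blocks read when answering $q_I$ cover a $k$-tuple that meets $I$ in at least $k/2$ elements. Applying Corollary~\ref{relaxedtradeoffpolynomial} with $\lambda = k$ to this induced scheme immediately yields the claimed space lower bound $\Omega\!\bigl((\tfrac{1}{2e}\sqrt{n\alpha/(kB)})^{\alpha}\bigr)$.

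I do not expect a genuine obstacle at this stage, since the substantive work was carried out in constructing the expander-based point set. The only care required is bookkeeping: verifying that the ``return $\lambda$ elements, at least half of which come from $Q$'' semantics of the relaxed $\lambda$-set workload is precisely what the geometric reduction delivers, and that the expander dimension $d = O(k \log(n/k))$ obtained from Lemma~\ref{lem:cor5Exp} (applied with $u = n$ and $m = k$) matches the dimension claimed in the corollary.
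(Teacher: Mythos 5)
Your proposal is correct and follows the same route as the paper: the expander-based lemma immediately preceding the corollary establishes that \cAkNN with $c<3$ in $O(k\log(n/k))$-dimensional $\ell_\infty$ space induces a relaxed $k$-set workload, and the space bound then follows by applying Lemma~\ref{relaxedtradeoff} and Corollary~\ref{relaxedtradeoffpolynomial} with $\lambda=k$. Your write-up simply makes explicit the bookkeeping (any valid answer lies in $I'$, hence at least $k/2$ returned points are in $I$) that the paper leaves implicit.
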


\subsection{Lower Bound for the Hamming metric}\label{sec:hamming_lower}

We now prove Theorem~\ref{hamminglower}, giving a lower bound on indexes on sets of vectors in Hamming space with approximation factor $c>1$.
This directly implies a lower bound in the $\ell_1$ metric, as well as a lower bound for $\ell_2$ with approximation factor $\sqrt{c}$.

\begin{lemma}\label{3apxLBhammingHighD}
  Consider the approximate $k$-NN problem in Hamming space of dimension~$d$.
  There exists a set of~$n$ points $p_i$, $i\in \{1,\ldots,n\}$ of dimension $d=n$
  that is a $k$-set workload, i.e,
  for every subset $I\subset [n]$ with $|I|=k$ there exists a query point $q_I$ such that the Hamming (or $\ell_1$) distance is:
  \[
    \|p_i-q_I\|_1 =
    \left\{
      \begin{array}{lr}
        k-1 & \text{for } i \in I\\
        k+1  & \text{for } i \not\in I
      \end{array}\right. \;.
  \]
  Hence, the \cAkNN\ for $c<1+2/(k-1)$ in $n$-dimensional Hamming space leads to a $k$-set workload.
\end{lemma}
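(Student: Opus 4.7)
The plan is to mirror the construction from Lemma~\ref{3apxLBhighdimSetWorkload} but over the Boolean cube, so that $\ell_\infty$ distances in $\{-1/2,+1/2\}^n$ translate into Hamming counts in $\{0,1\}^n$. Concretely, I would take $p_i = e_i$, the $i$th standard unit vector in $\{0,1\}^n$, and for each $k$-subset $I \subset [n]$ define the query $q_I$ to be the indicator vector of $I$, i.e., $(q_I)_j = 1$ iff $j \in I$. The intuition is that this is exactly the $\{0,1\}$-shift of Indyk's $\ell_\infty$ hard instance, with the point--query agreement on one versus zero coordinates in $I$ producing the desired $\pm 1$ gap in Hamming distance.

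The main step is then a direct coordinate count. For $i \in I$, the vectors $p_i$ and $q_I$ both have a $1$ in coordinate $i$, both have a $0$ in every coordinate outside $I$, and disagree on exactly the $k-1$ coordinates of $I \setminus \{i\}$, giving $\|p_i - q_I\|_1 = k-1$. For $i \notin I$, they disagree on coordinate $i$ (where $p_i = 1$, $q_I = 0$) and on all $k$ coordinates of $I$ (where $p_i = 0$, $q_I = 1$), while agreeing elsewhere, giving $\|p_i - q_I\|_1 = k+1$. This matches the distance pattern in the lemma statement.

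Finally, I would translate this gap into a workload statement. Since the $k$ nearest neighbors of $q_I$ are exactly the points indexed by $I$, at distance $r = k-1$, while every other data point has distance $k+1$, any \cAkNN\ indexing scheme with $c(k-1) < k+1$, i.e., $c < 1 + 2/(k-1)$, is forced to return exactly the set $I$. Ranging $I$ over all $\binom{n}{k}$ subsets of $[n]$ then yields a $k$-set workload in the sense of the original (non-relaxed) $\lambda$-set definition, which is the conclusion of the lemma.

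Essentially nothing here is hard: the construction is explicit, the distance identities are a one-line count, and the reduction from the gap to the workload is immediate from the approximation threshold. The only mild subtlety is checking that the threshold $1 + 2/(k-1)$ in this warm-up lemma is indeed at least as large as the threshold $1 + 1/(4k)$ used later in Theorem~\ref{hamminglower}, so that the lemma is strong enough to feed the subsequent dimension-reduction argument that will drop $d$ from $n$ to $O(k\log(n/k))$.
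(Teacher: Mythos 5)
Your proposal is correct and follows essentially the same route as the paper: take $p_i = e_i$ and let $q_I$ be the characteristic vector of $I$, count coordinates to get distances $k-1$ versus $k+1$, and observe that $(k+1)/(k-1) = 1 + 2/(k-1)$ gives the approximation threshold forcing an exact $k$-set workload. The extra detail you add (explicit coordinate count, sanity check against the later threshold) is fine but not needed.
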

\begin{proof}
  The set consists of the $n$ unit vectors $e_i$. The query vector for set~$I$ is the characteristic vector of~$I$. 
  It is easy to verify that the distances from the query to the vectors in $I$ is $k-1$, and to those not in $I$ is $k+1$. 
  Note that $(k+1)/(k-1) = 1 + 2/(k-1)$, which gives the bound on~$c$.
\end{proof}

\begin{lemma}\label{3apxLBhammingLowD}
  Consider the approximate $k$-NN problem in Hamming space of dimension~$d$.
  There exists a set of~$n$ points $p_i$, $i\in \{1,\ldots,n\}$, and $d=O(k\log(n/k))$
  that is a relaxed $k$-set workload, i.e,
  for every subset $I\subset [n]$ with \(|I|=k\) there exists a query point $q_I$ such that the Hamming ($\ell_1$)-distance is
  \[
    \|p_i-q_I\|_1 =
    \left\{
      \begin{array}{lr}
        \le d_I & \text{for } i \in I\\
        \ge d_I \left(1+\frac{1}{4(k-1)}\right) & \text{for } i \not\in I'
      \end{array}\right. \;
  \]
  for an $I' \supset I$ with $|I'| \le 1.5|I|$.
  Hence, $c<1+\frac{1}{4(k-1)}$ leads to a relaxed $k$-set workload for \cAkNN\ in $O(k\log n)$-dimensional Hamming space.
\end{lemma}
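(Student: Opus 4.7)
The plan is to adapt the expander-based construction used in the previous $\ell_\infty$ lemma, redoing the distance computation for Hamming. First I would invoke Lemma~\ref{lem:cor5Exp} with $|U|=n$, size parameter $m=2k$, and a small constant expansion parameter $\epsilon = 1/8$, to obtain an $(m,\delta,\epsilon)$-bipartite expander $G=(U,V,E)$ with left-degree $\delta = O(\log(n/k))$ and $d = |V| = O(k\log(n/k))$. As in the $\ell_\infty$ case, I define $p_i \in \{0,1\}^d$ as the indicator vector of $\Gamma(i)$ and, for $|I|=k$, define $q_I$ as the indicator vector of $\Gamma(I) = \bigcup_{i\in I}\Gamma(i)$.

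Next I would expand the Hamming distance using the identity
\[
\|p_i - q_I\|_1 = \delta + |\Gamma(I)| - 2|\Gamma(i)\cap \Gamma(I)|.
\]
For $i\in I$ we have $\Gamma(i)\subseteq \Gamma(I)$, and the distance collapses to $d_I := |\Gamma(I)|-\delta$; for general $i$ the distance equals $d_I + 2|\Gamma(i)\setminus \Gamma(I)|$. With the close set defined by $I' := \{i : |\Gamma(i)\setminus\Gamma(I)| < \delta/8\}$, any $i\notin I'$ is at distance at least $d_I + \delta/4$, and combined with the trivial upper bound $d_I \le (k-1)\delta$ this yields the target ratio $d_I + \delta/4 \ge d_I \bigl(1 + \tfrac{1}{4(k-1)}\bigr)$.

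The final step is to show $|I'|\le 3k/2$ via expansion. I would argue by contradiction: if there were a subset $J$ with $I\subseteq J \subseteq I'$ and $|J| = \lceil 3k/2\rceil \le m$, then by the definition of $I'$ every $i \in J\setminus I$ would contribute fewer than $\delta/8$ new vertices to $\Gamma(I)$, giving $|\Gamma(J)| < k\delta + (k/2)(\delta/8) = 17k\delta/16$; on the other hand the expander guarantees $|\Gamma(J)| \ge (1-\epsilon)\delta|J| = (7/8)(3k/2)\delta = 21k\delta/16$, a contradiction. Hence $|I'|\le 3k/2$, so the construction furnishes a relaxed $k$-set workload and an application of Lemma~\ref{relaxedtradeoff} then yields the space-I/O tradeoff of Theorem~\ref{hamminglower}.

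The main obstacle is the joint tuning of the expansion parameter $\epsilon$ and the threshold fraction defining $I'$: the first must be small enough that expansion forces $|I'|\le 3k/2$, and the second must be large enough that the distance gap exceeds $d_I/(4(k-1))$. The constants $\epsilon = 1/8$ and threshold $\delta/8$ satisfy both constraints simultaneously; once this trade-off is secured the remainder is routine counting.
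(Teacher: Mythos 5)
Your proposal is correct and follows essentially the same route as the paper: the same expander-based construction (points as characteristic vectors of $\Gamma(i)$, query as the characteristic vector of $\Gamma(I)$), the same Hamming-distance identity $d_j = d_I + 2|\Gamma(j)\setminus\Gamma(I)|$ with $d_I \le (k-1)\delta$, and the same expansion-based count of unintended near neighbors, just with parameters $(m,\epsilon)=(2k,1/8)$ and threshold $\delta/8$ in place of the paper's $(k,1/4)$ and $\delta/4$. If anything, your choice $m=2k$ combined with the contradiction argument on a subset $J$ with $I\subseteq J\subseteq I'$ and $|J|=\lceil 3k/2\rceil\le m$ is slightly tidier, since the paper applies the expansion guarantee directly to $I\cup I^*$, whose size exceeds the expander's parameter $m=k$ and is only bounded by $3k/2$ after the fact.
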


\begin{proof}
  Fix $n$ and $k(n)$.
  Let $G =(U, V, E)$ be an 
  $(m=k(n), \delta=O(\log n), 1/4)$-expander
  with $|U|=n$ and $|V|=d=O(k\log n)$, which exists by Lemma~\ref{lem:cor5Exp}.
  For concreteness we take $U=[n]$.

  Construct the set of points $p_1,\dots,p_n$ in the same way as earlier:
  \[
    (p_i)_h =
    \left\{
      \begin{array}{lr}
        1 & \text{for } h \in \Gamma(i)\\
        0  & \text{otherwise}
      \end{array}\right. \;.
  \]
  We define the query point $q_I$ for each set $I$ with $|I|=k$ where
  \[
    (q_I)_h =
    \left\{
      \begin{array}{lr}
        1 & \text{for } h \in \Gamma(I)\\
        0  & \text{otherwise}
      \end{array}\right. \;.
  \]
  In other words, the vectors are the characteristic vectors of the neighbor-sets in~$G$.

  We define \( d_j  :=\|p_j-q_I\|_1  \).
  For $i\in I$ we have \( d_i = |\Gamma(I)|-\delta \)
  and 
	\[d_i \geq k\delta(1-1/4) - \delta = \delta(k(1-1/4)-1)\enspace. \]
  For any $j$ (in particular \(j\notin I\)) we have \( d_j = \|p_j-q_I\|_1 = |\Gamma(I)|-\delta + 2|\Gamma(j)\setminus \Gamma(I)| \), leading to the definition $\gamma_j = |\Gamma(j)\setminus \Gamma(I)|$.
  We can set the distance threshold ratio such that the number of unintended neighbors is again at most $k/2$:
  
  We can set the distance threshold ratio to $1+1/(4 (k-1))$  such that the set of unintended near neighbors are the $j$ with $\gamma_j < \delta/4$:
  \[
    d_j/d_I = 1 + \gamma_j/d_I  \leq 1+1/(4(k-1)) \Rightarrow \gamma_j \leq d_I/(4 (k-1))
    \leq \delta/4
  \]
  To calculate the set of indices of unintended neighbors we define
  \[ I^* = \{ j \mid \gamma_j \leq \delta/4 \} \setminus I  \enspace . \]
	Then
  \[ |\Gamma(I \cup I^*)| \leq \delta  |I| + (\delta/4) |I^*| \]
  \[ |\Gamma(I \cup I^*)| \geq (1-1/4)\delta |I \cup I^*| = \tfrac{3}{4} \delta (|I|+| I^*|)\]
  leading to 
  \[ \delta  |I| + (\delta/4) |I^*| \geq \tfrac{3}{4} \delta (|I|+| I^*|) \]
  \[  \Rightarrow (\delta/4) |I^*| \geq \tfrac{3}{4} \delta | I^*| - |I| \delta /4  \]
  \[  \Rightarrow |I| \delta / 4 \geq \tfrac{3}{4} \delta | I^*| - (\delta/4) |I^*| \enspace . \]
  Hence
  \[ | I^*| \leq |I|  \frac{1/4}{\tfrac{3}{4}-1/4} = k/2 \enspace .
  \]
  This means that the described workload is a relaxed $k$-set workload. Applying Lemma~\ref{relaxedtradeoff} and Corollary~\ref{relaxedtradeoffpolynomial} now completes the proof of Theorem~\ref{hamminglower}.
\end{proof}

\section{Hamming metric indexing scheme}\label{sec:hamming_upper}

The general 3-approximate indexing scheme described in Section~\ref{sec:3apx} can be improved for specific metrics.
In this section we prove Theorem~\ref{hammingupper}. For any given approximation factor $c>1$ we wish to construct an indexing scheme that answers \cAkNN queries in $d$-dimensional Hamming space using polynomial space and with $\lceil k/B\rceil$ I/Os.
Our construction is an application of the dimension reduction technique of Kushilevitz et al.~\cite{kushilevitz2000efficient}.

For each $r\in\{1,\dots,d\}$ we create a data structure that handles the case where the $k$th closest point to $q$ is at distance $r$.
The data structure must report $k$ points that have distance at most $cr$ from~$q$.
The central idea of~\cite{kushilevitz2000efficient}  is to use a randomized mapping 
$$t: \{0,1\}^d\rightarrow \{0,1\}^D,$$ 
where $D=O(\log n)$, such that for each $q\in\{0,1\}^d$ with high probability for all $x,y\in P$:
\begin{equation}\label{eq:distinction}
d(q,x)\leq r \wedge d(q,y) > cr \implies  d(t(q),t(x)) < d(t(q),t(y)) \enspace .
\end{equation}
(We note that the required dimension $D$ grows as $c$ approaches $1$, hence we need to keep $c$ fixed.)
Consider the mapped multiset $t(P) = \{ t(x) \,|\, x\in P\}$ and create a data structure that for each $i\in \{0,1\}^D$ lists, for the $k$ nearest neighbors of $i$ in $t(P)$, the corresponding vectors in $P$ (breaking ties arbitrarily), using $\lceil k/B\rceil$ blocks.
If (\ref{eq:distinction}) holds then list $i=t(q)$ contains only $c$-approximate $k$-nearest neighbors of $q$.
To eliminate the error probability, choose $O(d)$ such random mappings and construct corresponding data structures: With high probability there will be no query $q\in\{0,1\}^d$ that does not have at least one data structure that returns a correct result.
If this fails for some $q$, start over from the beginning and choose new mappings.

The total space usage is $O(2^D d \lceil k/B\rceil)$, which is polynomial in $n$ and $d$, as desired.
Queries can be answered in $\lceil k/B\rceil$ I/Os since we are taking full advantage of the power of the indexability model: To answer a query $q$ it is necessary to know which mapping $t$ can be used to answer it correctly, and where in storage the blocks with index $t(q)$ reside.

Of course, we can also get an algorithm in the standard I/O model with a multiplicative query time overhead of $O(d)$ by querying \emph{all} repetitions and returning the $k$ closest points seen.

 \section{Conclusion and open problems}
 
We have shown that nontrivial lower bounds can be shown in the indexability model, even under approximation.
The main open problem that we leave is whether our hardness result for Hamming distance can be extended to approximation factor $c = 1+ \Omega(1)$, where the constant in $\Omega(1)$ is independent of $k$.
This would give an unconditional analogue of the recent conditional lower bound of Rubinstein~\cite{DBLP:conf/stoc/Rubinstein18}.

\bibliographystyle{plain}
\bibliography{knnindexability}

\begin{thebibliography}{10}

\bibitem{afshani2012improved}
Peyman Afshani.
\newblock Improved pointer machine and i/o lower bounds for simplex range
  reporting and related problems.
\newblock In {\em Proceedings of 28th Symposium on Computational Geometry
  ({SoCG})}, pages 339--346. ACM, 2012.

\bibitem{afshani2009orthogonal}
Peyman Afshani, Lars Arge, and Kasper~Dalgaard Larsen.
\newblock Orthogonal range reporting in three and higher dimensions.
\newblock In {\em 50th IEEE Symposium on Foundations of Computer Science
  ({FOCS})}, pages 149--158. IEEE, 2009.

\bibitem{aggarwal}
Alok Aggarwal and Jeffrey~S Vitter.
\newblock The input/output complexity of sorting and related problems.
\newblock {\em Communications of the ACM}, 31:1116--1127, 1988.

\bibitem{Williams}
Josh Alman and Ryan Williams.
\newblock Probabilistic polynomials and hamming nearest neighbors.
\newblock In {\em Proceedings of 56th symposium on Foundations of Computer
  Science (FOCS)}, pages 136--150, 2015.

\bibitem{andoni2018approximate}
Alexandr Andoni, Piotr Indyk, and Ilya Razenshteyn.
\newblock Approximate nearest neighbor search in high dimensions.
\newblock {\em arXiv preprint 1806.09823}, 2018.
\newblock Also appears in proceedings of ICM 2018.

\bibitem{andoni2017}
Alexandr Andoni, Thijs Laarhoven, Ilya~P. Razenshteyn, and Erik Waingarten.
\newblock Optimal hashing-based time-space trade-offs for approximate near
  neighbors.
\newblock In {\em Proceedings of 28th {ACM-SIAM} Symposium on Discrete
  Algorithms ({SODA})}, pages 47--66, 2017.

\bibitem{arge2009optimal}
Lars Arge, Vasilis Samoladas, and Ke~Yi.
\newblock Optimal external memory planar point enclosure.
\newblock {\em Algorithmica}, 54(3):337--352, 2009.

\bibitem{babenko2016efficient}
Artem Babenko and Victor Lempitsky.
\newblock Efficient indexing of billion-scale datasets of deep descriptors.
\newblock In {\em Proceedings of IEEE Conference on Computer Vision and Pattern
  Recognition}, pages 2055--2063, 2016.

\bibitem{babenko2014neural}
Artem Babenko, Anton Slesarev, Alexandr Chigorin, and Victor Lempitsky.
\newblock Neural codes for image retrieval.
\newblock In {\em European conference on computer vision}, pages 584--599.
  Springer, 2014.

\bibitem{bahmani2012efficient}
Bahman Bahmani, Ashish Goel, and Rajendra Shinde.
\newblock Efficient distributed locality sensitive hashing.
\newblock In {\em Proceedings of 21st ACM international conference on
  information and knowledge management (CIKM)}, pages 2174--2178. ACM, 2012.

\bibitem{Barkol:2002:TLB:779034.779041}
Omer Barkol and Yuval Rabani.
\newblock Tighter lower bounds for nearest neighbor search and related problems
  in the cell probe model.
\newblock {\em J. Comput. Syst. Sci.}, 64(4):873--896, June 2002.

\bibitem{Berchtold1997ACM}
Stefan Berchtold, Christian B{\"o}hm, Daniel~A. Keim, and Hans-Peter Kriegel.
\newblock A cost model for nearest neighbor search in high-dimensional data
  space.
\newblock In {\em Proceedings of 6th ACM Symposium on Principles of Database
  Systems ({PODS})}, 1997.

\bibitem{borodin1999lower}
Allan Borodin, Rafail Ostrovsky, and Yuval Rabani.
\newblock Lower bounds for high dimensional nearest neighbor search and related
  problems.
\newblock In {\em Proceedings of 31st Symposium on Theory of Computation
  (STOC)}, pages 312--321, 1999.

\bibitem{Chakrabarti:2004:ORC:1032645.1033203}
Amit Chakrabarti and Oded Regev.
\newblock An optimal randomised cell probe lower bound for approximate nearest
  neighbour searching.
\newblock In {\em Proceedings of 45th IEEE Symposium on Foundations of Computer
  Science}, FOCS '04, pages 473--482, 2004.

\bibitem{DBLP:conf/vldb/GionisIM99}
Aristides Gionis, Piotr Indyk, and Rajeev Motwani.
\newblock Similarity search in high dimensions via hashing.
\newblock In {\em Proceedings of 25th International Conference on Very Large
  Data Bases (VLDB)}, pages 518--529. Morgan Kaufmann, 1999.

\bibitem{DBLP:journals/toc/Har-PeledIM12}
Sariel Har{-}Peled, Piotr Indyk, and Rajeev Motwani.
\newblock Approximate nearest neighbor: Towards removing the curse of
  dimensionality.
\newblock {\em Theory of Computing}, 8(1):321--350, 2012.

\bibitem{indexability1}
Joseph Hellerstein, Elias Koutsoupias, Daniel Miranker, Christos Papadimitriou,
  and Samoladas Vasilis.
\newblock On a model of indexability and its bounds for range queries.
\newblock {\em Journal of the ACM}, 49(1):35--55, 2002.

\bibitem{indexability2}
Joseph Hellerstein, Elias Koutsoupias, and Christos Papadimitriou.
\newblock On the analysis of indexing schemes.
\newblock In {\em Proceedings of 16th ACM Symposium on Principles of Database
  Systems {(PODS)}}, pages 249--256. ACM, 1997.

\bibitem{hellerstein1995}
Joseph.M. Hellerstein, Jeffrey.~F. Naughton, and Avi Pfeffer.
\newblock Generalized search tree for database systems.
\newblock In {\em Proceedings of 21th International Conference on Very Large
  Data Bases (VLDB)}, pages 562--573. ACM, 1995.

\bibitem{hu_output-optimal_2019}
Xiao Hu, Ke~Yi, and Yufei Tao.
\newblock Output-optimal massively parallel algorithms for similarity joins.
\newblock {\em ACM Transactions on Database Systems}, 44(2):1--36, April 2019.

\bibitem{hu2014independent}
Xiaocheng Hu, Miao Qiao, and Yufei Tao.
\newblock Independent range sampling.
\newblock In {\em Proceedings of 33rd ACM Symposium on Principles of database
  systems ({PODS})}, pages 246--255. ACM, 2014.

\bibitem{indyk2001approximate}
Piotr Indyk.
\newblock On approximate nearest neighbors under $l_\infty$ norm.
\newblock {\em Journal of Computer and System Sciences}, 63(4):627--638, 2001.

\bibitem{DBLP:conf/stoc/Indyk07}
Piotr Indyk.
\newblock Uncertainty principles, extractors, and explicit embeddings of {L2}
  into {L1}.
\newblock In {\em Proceedings of 39th {ACM} Symposium on Theory of Computing
  (STOC)}, pages 615--620, 2007.

\bibitem{kapralov2015smooth}
Michael Kapralov.
\newblock Smooth tradeoffs between insert and query complexity in nearest
  neighbor search.
\newblock In {\em Proceedings of 34th ACM Symposium on Principles of Database
  Systems ({PODS})}, pages 329--342. ACM, 2015.

\bibitem{kushilevitz2000efficient}
Eyal Kushilevitz, Rafail Ostrovsky, and Yuval Rabani.
\newblock Efficient search for approximate nearest neighbor in high dimensional
  spaces.
\newblock {\em SIAM Journal on Computing}, 30(2):457--474, 2000.

\bibitem{larsen2017optimality}
Kasper~Green Larsen and Jelani Nelson.
\newblock Optimality of the johnson-lindenstrauss lemma.
\newblock In {\em 2017 IEEE 58th Annual Symposium on Foundations of Computer
  Science (FOCS)}, pages 633--638. IEEE, 2017.

\bibitem{larsen2013near}
Kasper~Green Larsen and Freek Van~Walderveen.
\newblock Near-optimal range reporting structures for categorical data.
\newblock In {\em Proceedings of 24th ACM-SIAM symposium on discrete algorithms
  (SODA)}, pages 265--276. SIAM, 2013.

\bibitem{matouvsek1996distortion}
Ji{\v{r}}{\'\i} Matou{\v{s}}ek.
\newblock On the distortion required for embedding finite metric spaces into
  normed spaces.
\newblock {\em Israel Journal of Mathematics}, 93(1):333--344, 1996.

\bibitem{muja2014scalable}
Marius Muja and David~G Lowe.
\newblock Scalable nearest neighbor algorithms for high dimensional data.
\newblock {\em IEEE transactions on pattern analysis and machine intelligence},
  36(11):2227--2240, 2014.

\bibitem{RasmusAnnaExpander02}
Anna {\"O}stlin and Rasmus Pagh.
\newblock One-probe search.
\newblock In {\em Proceedings of 29th international colloquium on automata,
  languages and programming (ICALP)}, pages 439--450, 2002.

\bibitem{panigrahy2010lower}
Rina Panigrahy, Kunal Talwar, and Udi Wieder.
\newblock Lower bounds on near neighbor search via metric expansion.
\newblock In {\em 2010 IEEE 51st Symposium on Foundations of Computer Science},
  pages 805--814. IEEE, 2010.

\bibitem{pestov2013lower}
Vladimir Pestov.
\newblock Lower bounds on performance of metric tree indexing schemes for exact
  similarity search in high dimensions.
\newblock {\em Algorithmica}, 66(2):310--328, 2013.

\bibitem{pestov2006indexing}
Vladimir Pestov and Aleksandar Stojmirovi{\'c}.
\newblock Indexing schemes for similarity search: {A}n illustrated paradigm.
\newblock {\em Fundamenta Informaticae}, 70(4):367--385, 2006.

\bibitem{DBLP:conf/stoc/Rubinstein18}
Aviad Rubinstein.
\newblock Hardness of approximate nearest neighbor search.
\newblock In Ilias Diakonikolas, David Kempe, and Monika Henzinger, editors,
  {\em Proceedings of 50th {ACM} {SIGACT} Symposium on Theory of Computing,
  {STOC} 2018, Los Angeles, CA, USA, June 25-29, 2018}, pages 1260--1268.
  {ACM}, 2018.

\bibitem{streppel2011approximate}
Micha Streppel and Ke~Yi.
\newblock Approximate range searching in external memory.
\newblock {\em Algorithmica}, 59(2):115--128, 2011.

\bibitem{DBLP:journals/tods/TaoYSK10}
Yufei Tao, Ke~Yi, Cheng Sheng, and Panos Kalnis.
\newblock Efficient and accurate nearest neighbor and closest pair search in
  high-dimensional space.
\newblock {\em {ACM} Trans. Database Syst.}, 35(3):20:1--20:46, 2010.

\bibitem{wei2009dynamic}
Zhewei Wei, Ke~Yi, and Qin Zhang.
\newblock Dynamic external hashing: The limit of buffering.
\newblock In {\em Proceedings of 21st symposium on Parallelism in algorithms
  and architectures}, pages 253--259. ACM, 2009.

\bibitem{yi2009dynamic}
Ke~Yi.
\newblock Dynamic indexability and lower bounds for dynamic one-dimensional
  range query indexes.
\newblock In {\em Proceedings of 28th ACM Symposium on Principles of Database
  Systems ({PODS})}, pages 187--196. ACM, 2009.

\end{thebibliography}

\end{document}